\newif\ifarxiv
\newif\iftcs
\journal{Theoretical Computer Science}
\def\AAtitle{Finitely (In)tractable Promise Constraint Satisfaction Problems}
\def\AAabstract{
The Promise Constraint Satisfaction Problem (PCSP) is a generalization of the Constraint Satisfaction Problem (CSP) that includes approximation variants of satisfiability and graph coloring problems.
Barto [LICS '19]
 has shown that a specific PCSP, the problem
to find a valid Not-All-Equal solution to a 1-in-3-SAT instance, is not finitely tractable in that it can be solved by a trivial reduction to a tractable CSP, but such a CSP is necessarily over an infinite domain (unless P=NP). 
We initiate a systematic study of this phenomenon by giving a general necessary condition for finite tractability and characterizing finite tractability within a class of templates -- the ``basic'' tractable cases in the dichotomy theorem for symmetric Boolean PCSPs allowing negations by Brakensiek and Guruswami [SODA'18].}
\def\AAthanks{An extended abstract of this work appeared in the Proceedings of MFCS 2021~\cite{AB21}. Both authors have received funding from the European Research Council (ERC) under the European Unions Horizon 2020 research and innovation programme (grant agreement No 771005). Kristina Asimi was also funded by  EPSRC grant EP/X03190X/1. Libor Barto was also funded by the European Union (ERC, POCOCOP, 101071674). Views and opinions expressed are however those of the author(s) only and do not necessarily reflect those of the European Union or the European Research Council Executive Agency. Neither the European Union nor the granting authority can be held responsible for them.
}
\begin{document}

\ifarxiv

\title{\AAtitle \thanks{\AAthanks}}

\author{Kristina Asimi\\
Charles University \\
Durham University
\and
Libor Barto\\
Charles University
}

\date{}

\maketitle

\begin{abstract}
    \AAabstract 
\end{abstract}

\fi

\iftcs
\begin{frontmatter}



\title{\AAtitle\tnoteref{thx}}
\tnotetext[thx]{\AAthanks}


\author[label1]{Kristina Asimi\corref{cor1}\fnref{label2}} 
\ead{asimptota94@gmail.com}
\cortext[cor1]{Corresponding author}

\fntext[label2]{Present address: Durham University, UK}
\affiliation[label1]{organization={Department of Algebra, Faculty of Mathematics and Physics, Charles University},
            addressline={Sokolovská 49/83}, 
            city={Prague},
            postcode={186 75}, 
            country={Czechia}}
            
\author[label1]{Libor Barto}
\ead{libor.barto@gmail.com}

\begin{abstract}
\AAabstract
\end{abstract}

\begin{keyword}
Constraint satisfaction problems \sep promise constraint satisfaction \sep Boolean
 PCSP \sep polymorphism \sep finite tractability \sep homomorphic relaxation
 \end{keyword}

\end{frontmatter}

\fi

\theoremstyle{definition}
\theoremstyle{plain}
\newtheorem{theorem}{Theorem}
\newtheorem{lemma}[theorem]{Lemma}
\newtheorem{definition}{Definition}

\newcommand{\CSP}{\mathrm{CSP}}
\newcommand{\PCSP}{\mathrm{PCSP}}
\newcommand{\Pol}{\mathrm{Pol}}
\newcommand{\arity}{\mathrm{ar}}
\newcommand{\rel}[1]{\mathbb{#1}}
\newcommand{\vc}[1]{\mathbf{#1}}
\newcommand{\lrins}{\leq\!r\mbox{-in-}s}
\newcommand{\grins}{\geq\!r\mbox{-in-}s}
\newcommand{\rins}{r\mbox{-in-}s}
\newcommand{\oddins}{\mbox{odd-in-}s}
\newcommand{\evenins}{\mbox{even-in-}s}
\newcommand{\naes}{\mbox{not-all-equal-}s}
\newcommand{\diseq}{(\neq,\neq)}
\newcommand{\ltrins}{\leq\!(2r-1)\mbox{-in-}s}
\newcommand{\gtrins}{\geq\!(2r-s+1)\mbox{-in-}s}
\newcommand{\bra}[1]{\langle #1 \rangle}
\newcommand{\brap}[1]{\langle #1 \rangle_p}
\newcommand{\bran}[1]{\langle #1 \rangle_n}
\newcommand{\tresh}{\theta}
\newcommand{\ts}{t^{\sigma}}
\newcommand{\bbb}{b}
\newcommand{\dz}{\Delta z}
\newcommand{\sgn}{\mathrm{sgn}}
\newcommand{\Ham}{\mathrm{Ham}}
\newcommand{\Par}{\mathrm{Par}}
\newcommand{\Maj}{\mathrm{Maj}}
\newcommand{\AT}{\mathrm{AT}}


\section{Introduction}
\label{intro}

Many computational problems, including various versions of logical satisfiability, graph coloring, and systems of equations can be phrased as Constraint Satisfaction Problems (CSPs) over fixed templates (see~\cite{BKW17}). One of the possible formulations of the CSP is via homomorphisms of relational structures: a \emph{template} $\rel A$ is a relational structure with finitely many relations and the CSP over $\rel A$, written $\CSP(\rel A)$, is the problem to decide whether a given finite relational structure  $\rel X$ (similar to $\rel A$) admits a homomorphism to $\rel A$. 

The complexity of CSPs over finite templates (i.e., those templates whose domain is a finite set) is now completely classified by a celebrated dichotomy theorem independently obtained by Bulatov~\cite{Bul17} and Zhuk~\cite{Zhu17,Zhu20}: every $\CSP(\rel A)$ is either tractable (that is, solvable in polynomial-time) or NP-complete. The landmark results leading to the complete classification include Schaefer's dichotomy theorem~\cite{Sch78} for CSPs over Boolean structures (i.e., structures with a two-element domain), Hell and Ne\v set\v ril's dichotomy theorem~\cite{HN90} for CSPs over graphs, and Feder and Vardi's thorough study~\cite{FV98} through Datalog and group theory. The latter paper also inspired the development of a mathematical theory of finite-template CSPs~\cite{Jea98, BJK05,BOP18}, the so called \emph{algebraic approach}, that provided guidance and tools for the general dichotomy theorem by Bulatov and Zhuk. 

The algebraic approach has been successfully applied in many variants and generalizations of the CSP such as the infinite-template CSP~\cite{Bod21} 
or
valued CSP~\cite{KKR17}. This paper concerns a recent vast generalization of the basic CSP framework, the Promise CSP (PCSP). 

A \emph{template} for the PCSP is a pair $(\rel A,\rel B)$ of similar structures such that $\rel A$ has a homomorphism to $\rel B$, and the PCSP over $(\rel A,\rel  B)$, written $\PCSP(\rel A,\rel B)$, is the problem to distinguish between the case that a given finite structure $\rel X$ admits a homomorphism to $\rel A$ and the case that $\rel X$ does not have a homomorphism to $\rel B$ (the promise is that one of the cases takes place). This framework generalizes that of CSP (take $\rel A = \rel B$) and additionally includes important problems in approximation, e.g., if $\rel A = \rel K_{k}$ (the clique on $k$ vertices) and $\rel B = \rel K_l$, $k \leq l$, then $\PCSP(\rel A,\rel B)$ is a version of the approximate graph coloring problem, namely, the problem to distinguish graphs that are $k$-colorable from those that are not $l$-colorable, a problem whose complexity is open after more than 40 years of research. On the other hand, the basics of the algebraic approach to CSPs can be generalized to PCSPs~\cite{AGH17,BG21,BBKO21,KO22}.

The approximate graph coloring problem shows that a full classification of the complexity of PCSPs over graph templates is still open and so is the analogue of Schaefer's Boolean CSP, PCSPs over pairs of Boolean structures. However, strong partial results have already been obtained. Brakensiek and Guruswami~\cite{BG21} proved
a dichotomy theorem for all symmetric Boolean templates allowing negations, i.e., templates $(\rel A, \rel B)$ such that $\rel{A}=(\{0,1\}; R_0, R_1, \dots)$, $\rel{B} = (\{0,1\}; S_0, S_1, \dots)$, each relation $R_i$, $S_i$ is invariant under permutations of coordinates, and $R_0=S_0$ is the binary disequality relation $\neq$. These templates play a central role in this paper. Ficak, Kozik, Ol\v s\'ak, and Stankiewicz~\cite{FKOS19} later generalized this result to all symmetric Boolean templates.

To prove tractability or hardness results for PCSPs, a very simple but useful reduction is often applied: If $(\rel A,\rel B)$ and $(\rel A',\rel B')$ are similar PCSP templates and there exist homomorphisms $\rel A' \to \rel A$ and $\rel B \to \rel B'$, then the trivial reduction (which does not change the instance) reduces $\PCSP(\rel A',\rel B')$ to $\PCSP(\rel A,\rel B)$; we say that $(\rel A',\rel B')$ is a \emph{homomorphic relaxation} of $(\rel A,\rel B)$. In fact, all the tractable symmetric Boolean PCSPs can be reduced in this way to a tractable CSP over a structure with a \emph{possibly infinite domain}. 

An interesting example of a PCSP that can be naturally reduced to a tractable CSP over an infinite domain is the following problem. An instance is a list of triples of variables and the problem is to distinguish instances that are satisfiable as positive 1-in-3-SAT instances from those that are not even satisfiable as Not-All-Equal-3-SAT instances. This computational problem is essentially the same as $\PCSP(\rel{A},\rel{B})$ where $\rel{A}$ consists of the ternary 1-in-3 relation over $\{0,1\}$  and $\rel B$ consists of the ternary not-all-equal relation over $\{0,1\}$. It is easy to see that $\rel A \to \rel C \to \rel B$
where $\rel{C}$ is the relation ``$x+y+z=1$'' over the set of all integers. Therefore $\PCSP(\rel A, \rel B)$ is reducible (by means of the trivial reduction) to $\PCSP(\rel C, \rel C) = \CSP(\rel C)$ which is a tractable problem. The main result of~\cite{Bar19} (Section 8 in \cite{BBKO21}) is that no finite structure can be used in place of $\rel C$ for this particular template -- this PCSP is not finitely tractable in the sense of the following definition.

\begin{definition}
We say that $\PCSP({\rel{A},\rel{B}})$ is \emph{finitely tractable} if there exists a finite relational structure $\rel{C}$ such that $\rel{A} \to \rel{C} \to \rel{B}$ and $\CSP({\rel{C}})$ is tractable. Otherwise we call $\PCSP(\rel A,\rel B)$ \emph{not finitely tractable}, or \emph{finitely intractable}. (We assume P $\neq$ NP throughout the paper.)
\end{definition}

In this paper, we initiate a systematic study of this phenomenon. As the main technical contribution, we determine which of the ``basic tractable cases'' in Brakensiek and Guruswami's classification~\cite{BG21} are finitely tractable. It turns out that finite tractability is quite rare, so the infinite nature of the 1-in-3 versus Not-All-Equal problem is not exceptional at all.
On the other hand, there are interesting examples of finitely tractable PCSPs  \cite{NZ24,LZ25,M24}.

\subsection{Symmetric Boolean PCSPs allowing negations}\label{subsection:basiccases}

We now discuss the classification of symmetric Boolean templates allowing negations from~\cite{BG21}. It will be convenient to describe these templates by listing the corresponding relation pairs, that is, instead of $(\rel A = (\{0,1\}; R_1, R_2, \dots, R_n), \rel B = (\{0,1\}; S_1, S_2, \dots, S_n))$ we describe this template by the list $(R_1,S_1)$, $(R_2,S_2)$, \dots, $(R_n,S_n)$. Recall that the template is \emph{symmetric} if all the involved relations are symmetric, i.e., invariant under any permutation of coordinates, and the template \emph{allows negations} if $\diseq$ is among the relation pairs, where $\neq$, defined as $\{(0,1),(1,0)\}$, is the disequality relation.   

It may be also helpful to think of an instance of $\PCSP(\rel A,\rel B)$ as a list of constraints of the form $R_i(\mbox{variables})$ and the problem is to distinguish between instances where each constraint is satisfiable and those which are not satisfiable even when we replace each $R_i$ by the corresponding ``relaxed version'' $S_i$. Allowing negations then means that we can use constraints $x \neq y$ -- we can effectively negate variables.  

The following relations are important for the classification.  

\begin{itemize}
   \item $\oddins = \{\vc{x}\in\{0,1\}^s:\sum_{i=1}^{s} x_i \mbox{ is odd}\}$, \\ 
   $\evenins = \{\vc{x} \in \{0,1\}^s:\sum_{n=1}^{s} x_i \mbox{ is even}\}$ 
   \item 
   $\rins = \{\vc{x} \in \{0,1\}^s:\sum_{n=1}^{s} x_i = r\}$
   \item 
   $\lrins = \{\vc{x}\in\{0,1\}^s:\sum_{i=1}^{s} x_i\leq r\}$, \\ 
   $\grins = \{\vc{x}\in\{0,1\}^s:\sum_{i=1}^{s} x_i\geq r\}$
   \item
   $\naes = \{\vc{x} \in \{0,1\}^s: \sum_{i=1}^{s} x_i \not\in \{0,s\}\}$
\end{itemize}

The next theorem lists some of the tractable cases of the classification, which are ``basic'' in the sense explained below. 

\begin{theorem}[\cite{BG21}] \label{thm:bg} 
$\PCSP((P, Q),\diseq)$ is tractable if $(P,Q)$ is equal to
\begin{enumerate}[(a)]
\item $(\oddins,\oddins)$, or $(\evenins,\evenins)$, or
\item $(\lrins,\ltrins)$ and $r \leq s/2$, or \\ 
      $(\grins, \gtrins)$ and $r \geq s/2$, or   
\item $(\rins,\naes)$

\end{enumerate}
for some positive integers $r,s$.
\end{theorem}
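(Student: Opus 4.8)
All three families call for different treatments, but the common scheme is that the trivial reduction factors through something tractable: a finite CSP in (a), an infinite‑domain CSP in (c), and a linear programming relaxation in (b). I would first dispose of the second alternative in (b): swapping $0$ and $1$ is an isomorphism from $(\{0,1\};\grins,\neq)$ to $(\{0,1\};\leq\!(s-r)\mbox{-in-}s,\neq)$ and from the $\rel B$-side to $(\{0,1\};\leq\!(2(s-r)-1)\mbox{-in-}s,\neq)$, so $\PCSP((\grins,\gtrins),\diseq)$ with $r\geq s/2$ is literally the same problem as $\PCSP((\lrins,\ltrins),\diseq)$ with $s-r$ in place of $r$, now satisfying $s-r\leq s/2$. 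Hence it suffices to handle (a), (c), and the first alternative of (b).

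\emph{Case (a).} Here $\rel A=\rel B$, so $\PCSP(\rel A,\rel B)=\CSP(\rel A)$ with $\rel A=(\{0,1\};\oddins,\neq)$ (or with $\evenins$). Every relation in sight is the solution set of a system of $\mathbb{Z}_2$-linear equations ($\evenins$: $x_1+\dots+x_s=0$; $\oddins$: $x_1+\dots+x_s=1$; $\neq$: $x_1+x_2=1$), so an instance of $\CSP(\rel A)$ is exactly such a system and Gaussian elimination solves it. This case is in fact finitely tractable, witnessed by $\rel C=\rel A=\rel B$.

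\emph{Case (c).} I would take $\rel C=(\mathbb{Z};E,D)$ with $E=\{\vc x\in\mathbb{Z}^s:x_1+\dots+x_s=r\}$ and $D=\{(x,y)\in\mathbb{Z}^2:x+y=1\}$ and check $\rel A\to\rel C\to\rel B$. The inclusion $\{0,1\}\hookrightarrow\mathbb{Z}$ gives $\rel A\to\rel C$. For $\rel C\to\rel B$ take $h\colon\mathbb{Z}\to\{0,1\}$ with $h(x)=1$ iff $x\geq 1$: it maps $D$ into $\neq$ since for an integer $x$ exactly one of $x\geq 1$, $1-x\geq 1$ holds, and it maps $E$ into $\naes$ since an all-zero image forces all coordinates $\leq 0$ and hence the sum $\leq 0<r$, whereas an all-one image forces all coordinates $\geq 1$ and hence the sum $\geq s>r$; this is the only place the assumption $1\leq r\leq s-1$ (equivalent to $\rel A\to\rel B$) is used. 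Finally $\CSP(\rel C)$ is the problem of deciding solvability of a system of linear equations over $\mathbb{Z}$, which is in polynomial time via Hermite normal form, so the trivial reduction sends $\PCSP(\rel A,\rel B)$ to it. (By \cite{Bar19}, already in the $1$-in-$3$ versus not-all-equal case $r=1$, $s=3$, no \emph{finite} structure can replace $\rel C$.)

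\emph{Case (b), $(\lrins,\ltrins)$ with $r\leq s/2$.} This is the substantial one; I would solve it by linear programming (the hypothesis $r\leq s/2$ is precisely what keeps $\ltrins$ a proper subrelation of $\{0,1\}^s$; outside this range the problem is trivial). Given an instance, first view the $\neq$-constraints as a graph on the variables: if it has an odd cycle, reject, since there is then no homomorphism even to $\rel B$; otherwise identify each connected component with a single Boolean variable, turning every $\lrins$-constraint into ``at most $r$ of these $s$ literals are true''. Then solve the LP relaxation over $\mathbb{Q}$: one variable per Boolean variable, confined to $[0,1]$, and one inequality per constraint saying the sum of the values of its $s$ literals is at most $r$. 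If the LP is infeasible, reject (no homomorphism to $\rel A$); otherwise accept. Correctness rests on the rounding claim that from any fractional solution $\vc z^{*}$ the assignment setting a Boolean variable to $1$ iff its value exceeds $1/2$ is a homomorphism to $\rel B$, i.e.\ leaves at most $2r-1$ true literals in each constraint: each true literal has value at least $1/2$, so $r$ bounds the literal values of a constraint from above gives at most $2r$ true literals, and exactly $2r$ only in the degenerate situation where $2r$ of them sit at value exactly $1/2$ and the remaining $s-2r\geq 0$ at $0$, which must be repaired by re-breaking the ties on half-valued variables so as to falsify one literal of each offending constraint at once. Making these tie-breaks globally consistent is, I expect, the main obstacle, and it is exactly where $2r\leq s$ is genuinely used. (One can instead avoid the explicit rounding by observing that the template has the majority operations $M_n$ of all odd arities as polymorphisms — each is self-dual, hence compatible with $\diseq$, and if $t_1,\dots,t_n\in\lrins$ and $c_j$ counts the $i$ with $(t_i)_j=1$, then $M_n$ outputs $1$ in coordinate $j$ iff $c_j>n/2$, and $\sum_j c_j=\sum_i|t_i|\leq rn$ forces fewer than $2r$ such $j$ — and then invoking the relaxation-based tractability machinery for PCSPs.)
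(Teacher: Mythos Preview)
The paper does not actually prove Theorem~\ref{thm:bg}; it is quoted from~\cite{BG18} and the only commentary in the present paper is the one-line remark that tractability ``stems from nice polymorphisms: parities (item~(a)), majorities (item~(b)), and alternating thresholds (item~(c)).'' So there is no proof here to compare against, only this hint.

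Your treatment of (a) is correct and matches the hint (parities $=$ linear algebra over $\mathbb{Z}_2$). Your treatment of (c) via the sandwich $\rel A \to (\mathbb{Z};E,D) \to \rel B$ is also correct; this is the same route the paper's introduction sketches for the $1$-in-$3$ versus not-all-equal case, and it generalises verbatim to arbitrary $1\le r\le s-1$. It differs from the hint (alternating thresholds), but both arguments are standard and yours is arguably more self-contained.

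The genuine gap is in (b). Your direct LP-rounding argument is incomplete, and you say so yourself: thresholding at $1/2$ can leave exactly $2r$ true literals when $2r$ fractional values sit at $1/2$, and you defer the ``globally consistent tie-breaking'' without showing it can be done. That is not a detail --- it is the whole difficulty, and it does not obviously succeed (constraints can share half-valued variables with opposite signs, so fixing one constraint can break another). Your parenthetical alternative, exhibiting the odd-arity majorities $M_n$ as polymorphisms and then invoking the theorem that BLP decides any PCSP whose template has symmetric polymorphisms of every arity, is the correct route and matches the paper's hint about majorities; but note that you only produced $M_n$ for odd $n$, so to apply the BLP criterion literally you still need even-arity symmetric polymorphisms (take, e.g., $M_{n+1}$ with one coordinate repeated, or cite the version of the criterion that only asks for arbitrarily large arities). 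Either way, the argument is not self-contained: it rests on the BLP tractability theorem from~\cite{BG18,BBKO}, which is fine for a citation but should be stated explicitly rather than alluded to as ``relaxation-based tractability machinery.''
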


It can be derived from the results in \cite{BG21} (see \cref{append}) that every tractable symmetric Boolean PCSP allowing negations can be obtained by
\begin{itemize}
    \item taking 
    any number of relation pairs from one of the following three items (where $r$ and $s$ are positive integers):
    \begin{enumerate}[(a)]
\item $(\oddins,\oddins)$, or $(\evenins,\evenins)$
\item $(\lrins,\ltrins)$ and $r \leq s/2$, or \\ 
      $(\grins, \gtrins)$ and $r \geq s/2$, or \\
      $(\frac{s}{2}\mbox{-in-}s,\naes)$ and $s$ is even
\item $(\rins,\naes)$
\end{enumerate} 
    \item adding any number of ``trivial'' relation pairs $(P,Q)$ such that $P \subseteq Q$, and $Q$ is the full relation or $P$ contains only constant tuples, and 
    \item taking a homomorphic relaxation of the obtained template.
\end{itemize}
In this sense, \cref{thm:bg} provides building blocks for all tractable templates.

\subsection{Contributions}

Some of the cases in \cref{thm:bg} are finitely tractable: templates in item (a) are tractable CSPs (they can be decided by solving systems of linear equations of the two-element field), templates in item (c) for $r$ odd and $s$ even are homomorphic relaxations of $(\oddins,\oddins)$, and templates in item (b) for $r=1$ or $r=s-1$ as well as all templates with $s \leq 2$ are tractable CSPs (reducible to 2-SAT)~\cite{Sch78,BKW17}. Our main theorem proves that all the remaining cases are not finitely tractable. In fact, we prove this property even for some relaxations of these templates:

\begin{theorem} \label{thm:main}
\sloppy The PCSP over any of the following templates is not finitely 
tractable. 
\begin{itemize}
    \item[(1)] $(\rins,\ltrins), \diseq$ where $1 < r < s/2$, \\
          $(\rins,\gtrins), \diseq$ where $s/2 < r < s-1$ 
    \item[(2)] $(\lrins,\ltrins), \diseq$ where $s$ is even, $1 < r = s/2$\\
          $(\grins,\gtrins), \diseq$ where $s$ is even, $1 < r = s/2$
    \item[(3)] $(\rins,\ltrins), \diseq$ where $s$ is even, $1 < r = s/2$, and $r$ is even \\
          $(\rins,\gtrins), \diseq$ where $s$ is even, $1 < r = s/2$, and $r$ is even    
   \item[(4)] $(\rins,\naes)$ where $s>r$, $s>2$, and $r$ is even or $s$ is odd
\end{itemize}
\end{theorem}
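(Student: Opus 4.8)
The plan is to prove non-finite-tractability via a topological/combinatorial obstruction on polymorphisms, following the template of Barto's argument for $(\rins[1][3], \naes[3])$. Suppose toward a contradiction that some finite structure $\rel C$ satisfies $\rel A \to \rel C \to \rel B$ with $\CSP(\rel C)$ tractable; by the Bulatov--Zhuk dichotomy, $\Pol(\rel C)$ contains a weak near-unanimity operation, and more usefully $\Pol(\rel A, \rel B) \supseteq \Pol(\rel A, \rel C) \cap \dots$ — more precisely, the homomorphisms $\rel A \to \rel C \to \rel B$ give rise to a minion homomorphism from $\Pol(\rel A, \rel B)$ that factors through $\Pol(\rel C)$, so $\Pol(\rel C)$ maps, as a minion, onto something with non-trivial structure while still (since $\CSP(\rel C)$ is tractable) admitting a Taylor/cyclic term. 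The strategy is to extract from the polymorphisms of the promise template a family of ``threshold-like'' Boolean-valued invariants — for the $\lrins$/$\ltrins$ cases these are the maps measuring, on a long tuple of $0$s and $1$s fed coordinatewise, whether the Hamming weight crosses $s/2$ — and show that the existence of a finite intermediate $\rel C$ would force these invariants to be simultaneously ``continuous'' (locally determined) and to satisfy the cyclic identity, which is impossible by a counting/parity argument analogous to the fact that there is no cyclic Boolean threshold function of the required balance.

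Concretely I would proceed as follows. First, reduce all four families to a uniform statement about $\Pol(\rel A, \rel B)$: using the disequality relation $\diseq$ (present in families (1)--(3)) one shows every polymorphism is, up to negating some coordinates, monotone and determined by a symmetric threshold behavior, exactly as in the Brakensiek--Guruswami analysis; for family (4) one argues separately since $\diseq$ may be absent, using instead the not-all-equal structure. Second, I would invoke the standard fact (from the algebraic theory of PCSP, e.g.\ \cite{BKO19,BBKO}) that if $\PCSP(\rel A, \rel B)$ is finitely tractable via $\rel C$, then there is a minion homomorphism $\Pol(\rel C) \to \Pol(\rel A, \rel B)$ and $\Pol(\rel C)$ contains cyclic operations of all prime arities $p$. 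Third — the heart of the argument — I would compose: a cyclic operation $c$ of large prime arity $p$ in $\Pol(\rel C)$ pushes forward to a cyclic operation in $\Pol(\rel A, \rel B)$, and then analyze what cyclic symmetric polymorphisms of arity $p$ of $(\rins, \ltrins)$ look like. The key inequality is that a cyclic polymorphism $f\colon \{0,1\}^p \to \{0,1\}$ of $(\rins[r][s], \ltrins[r][s])$, when restricted to inputs of a fixed Hamming weight, must itself be a symmetric function whose ``jump point'' is pinned between $rp/s$ and $(2r-1)p/s$; cyclicity forces a divisibility/self-similarity constraint across different $p$, and because $r/s$ and $(2r-1)/s$ straddle $1/2$ in a way that (for the excluded parameter ranges) admits no consistent rational threshold, one derives a contradiction — this is where the conditions $1<r<s/2$, $r = s/2$ with $s$ even, and the parity conditions in (3),(4) enter, each ruling out a different ``escape'' (e.g.\ reduction to linear equations when $r$ odd, $s$ even, or to $2$-SAT when $r=1$).

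For the necessary-condition machinery I would first prove a general lemma (the ``general necessary condition for finite tractability'' promised in the abstract): if $\PCSP(\rel A, \rel B)$ is finitely tractable then $\Pol(\rel A, \rel B)$ admits, for every prime $p$, a cyclic operation that moreover factors through a \emph{finite} algebra — equivalently, there is a uniform bound $N$ such that the ``cyclic loop condition'' is witnessed using at most $N$ values, which one then contradicts by exhibiting, for each template in the list, instances (built from long cycles / odd cycles of the relevant relation) whose polymorphism-consistent labelings provably require unboundedly many distinct values. The construction of these ``hard gadget'' instances — tuned separately to $\ltrins$ with $r<s/2$, to the balanced case $r = s/2$, and to the NAE case — is the bulk of the work.

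The main obstacle I anticipate is the balanced case $r = s/2$ (families (2) and (3)) and the fine parity distinctions: when $r = s/2$ the interval $[r/s, (2r-1)/s] = [1/2, (s-1)/s]$ contains $1/2$ as an endpoint, so the ``no rational threshold'' argument is delicate and must be replaced by a more careful analysis of how cyclic operations of even versus odd arity behave at weight exactly $p/2$ (which forces $p$ even, contradicting primality except $p=2$, but $p=2$ escapes are killed precisely by the hypotheses $r>1$ and the parity conditions). Separating genuinely-not-finitely-tractable instances from the ones that sneak into item (a) or into $2$-SAT/$3$-SAT via a homomorphic relaxation requires getting these boundary parities exactly right, and I expect the proof to need a somewhat intricate case analysis there, together with an explicit infinite ascending chain of gadgets witnessing the failure of any finite bound $N$. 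Family (4), lacking $\diseq$, needs its own gadget based on odd cycles for the not-all-equal side, but the underlying obstruction (a parity/counting invariant that is continuous on the promise template but cannot be made cyclic on a finite $\rel C$) is the same.
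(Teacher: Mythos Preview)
Your high-level framework is right---minion homomorphism $\Pol(\rel C)\to\Pol(\rel A,\rel B)$, cyclic terms in $\Pol(\rel C)$ for all large primes, push forward and analyze---but the core mechanism you propose for the contradiction does not work, and the paper's actual device is different in an essential way.

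The gap is in your third step. You plan to analyze the cyclic polymorphisms of $(\rel A,\rel B)$ of prime arity $p$ and argue that ``cyclicity forces a divisibility/self-similarity constraint across different $p$'' with ``no consistent rational threshold.'' But all of these templates are tractable PCSPs, so $\Pol(\rel A,\rel B)$ \emph{does} contain cyclic operations of every prime arity (indeed, the threshold/alternating-threshold polymorphisms are cyclic). Mere existence of cyclic terms in $\Pol(\rel A,\rel B)$ cannot yield a contradiction; you need to extract more from the finiteness of $C$ than ``$\Pol(\rel C)$ has cyclic terms, hence so does $\Pol(\rel A,\rel B)$.'' Your phrase ``factors through a finite algebra'' does not pin this down either, since the polymorphisms are already Boolean-valued.

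What the paper does instead is compose a $p$-ary cyclic $c\in\Pol(\rel C)$ with itself to get the $p^2$-ary operation $t(x_{11},\dots,x_{pp})=c(c(x_{11},\dots,x_{p1}),\dots,c(x_{1p},\dots,x_{pp}))$, and observes that $t$ is \emph{doubly cyclic} (cyclic in each block and in the blocks) and, crucially, \emph{$b$-bounded}: on two-variable inputs the inner $c$ takes at most $b=|C|^{|C|^2}$ distinct function values, a bound \emph{independent of $p$}. These are h1 identities, so they survive the minion homomorphism to $\Pol(\rel A,\rel B)$. The entire contradiction is then obtained for a \emph{single} large prime $p$: one shows that the $t$-value on ``almost-rectangle'' inputs is determined by whether the fraction of ones crosses a threshold (this uses double cyclicity and an inductive argument on the ``step size'' of the rectangle), and then the $b$-boundedness forces, via pigeonhole among $\sim b$ tuples near the threshold, two inputs on opposite sides of the threshold to have equal $t$-value---contradiction. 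There are no gadget instances and no comparison across different primes; the parity restrictions in items (2)--(4) enter in the base case of the threshold analysis (step size $\le 1$), not in ruling out escapes to $2$-SAT or parity.
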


Note that the templates in the last item do not contain the disequality pair; the special case with $r=1$ and $s=3$ is the main result of \cite{Bar19}. Disequalities in the other items are necessary, since otherwise the templates are homomorphic relaxations of CSPs over one-element structures. 

In \cref{thm:crit} we provide a general necessary condition for finite tractability of an arbitrary finite-template PCSP in terms of so called h1 identities. 
Showing that templates in \cref{thm:main} do not satisfy this necessary condition forms the bulk of the paper. 

The necessary condition in \cref{thm:crit} seems very unlikely to be sufficient for finite tractability. Nevertheless, we observe in \cref{thm:h1} that finite tractability \emph{does} depend only on h1 identities, just like polynomial-time solvability~\cite{BBKO21}, see \cref{thm:minion} and the discussion following the theorem.

\section{Preliminaries}

\subsection{PCSP}

For every positive integer $n$ we let $[n] = \{1, 2, \dots, n\}$.

A \emph{relational structure} (of finite signature) is a tuple $\rel{A}=(A;R_1,R_2,\dots,R_n)$ where $A$ is a set, called the \emph{domain}, and each $R_i$ is a relation on $A$ of arity $\arity(R_i)\geq 1$, that is, $R_i \subseteq A^{\arity(R_i)}$. The structure $\rel{A}$ is finite if $A$ is finite. Two relational structures $\rel{A}=(A;R_1,R_2,\dots,R_n)$ and $\rel{B}=(B;S_1,S_2,\dots,S_n)$ are \emph{similar} if they have the same number of relations and $\arity(R_i)=\arity(S_i)$ for each $i\in [n]$.
In this case, a \emph{homomorphism} from $\rel{A}$ to $\rel{B}$ is a mapping $f:A\rightarrow B$ such that $(f(a_1),f(a_2),\dots,f(a_k))\in S_i$ whenever $i\in [n]$ and $(a_1,a_2,\dots,a_k)\in R_i$ where $k=\arity(R_i)$.
If there exists a homomorphism from $\rel{A}$ to $\rel{B}$, we write $\rel{A}\rightarrow\rel{B}$, and if there is none, we write $\rel{A}\not\to\rel{B}$.

\begin{definition}
A \emph{$\PCSP$ template} is a pair $(\rel{A},\rel{B})$ of similar relational structures such that $\rel{A}\rightarrow\rel{B}$.

The $\PCSP$ over $(\rel{A},\rel{B})$, written $\PCSP(\rel{A},\rel{B})$, is the following problem. Given a finite relational structure $\rel{X}$ similar to $\rel{A}$ (and $\rel{B}$), output ``Yes.'' if $\rel{X} \to \rel{A}$ and output ``No.'' if $\rel{X} \not\to \rel{B}$. 

We define $\CSP(\rel{A})=\PCSP(\rel{A},\rel{A})$.
\end{definition}

\begin{definition}
Let $(\rel{A},\rel{B})$ and $(\rel{A'},\rel{B'})$ be similar $\PCSP$ templates. We say that $(\rel{A'},\rel{B'})$ is a \emph{homomorphic relaxation} of $(\rel{A},\rel{B})$ if $\rel{A'}\rightarrow\rel{A}$ and $\rel{B}\rightarrow\rel{B'}$.
\end{definition}

Recall that if $(\rel{A'},\rel{B'})$ is a homomorphic relaxation of $(\rel{A},\rel{B})$, then the trivial reduction, which does not change the input structure $\rel{X}$, reduces $\PCSP(\rel{A'},\rel{B'})$ to $\PCSP(\rel{A},\rel{B})$.

\subsection{Polymorphisms}

A crucial concept for the algebraic approach to (P)CSP is a polymorphism.

\begin{definition}
    Let $R\subseteq A^k$ and $S\subseteq B^k$ be relations. A function $c:A^n\to B$ is a \emph{polymorphism} of $(R,S)$ if 
    \[
\begin{pmatrix}
a_{11}\\
a_{21}\\
\vdots\\
a_{k1}
\end{pmatrix}\in R,
\begin{pmatrix}
a_{12}\\
a_{22}\\
\vdots\\
a_{k2}
\end{pmatrix}\in R,
\ldots,
\begin{pmatrix}
a_{1n}\\
a_{2n}\\
\vdots\\
a_{kn}
\end{pmatrix}\in R\Rightarrow
\begin{pmatrix}
c(a_{11},a_{12},\dots,a_{1n})\\
c(a_{21},a_{22},\dots,a_{23})\\
\vdots\\
c(a_{k1},a_{k2},\dots,a_{kn})
\end{pmatrix}\in S.
\]
\end{definition}

\begin{definition} Let $\rel{A}=(A;R_1,R_2,\dots,R_m)$ and $\rel{B}=(B;S_1,S_2,\dots,S_m)$ be two similar relational structures. A function $c:A^n\rightarrow B$ is a \emph{polymorphism} from $\rel{A}$ to $\rel{B}$ if it is a polymorphism of $(R_i,S_i)$ for every $i\in\{1,2,\dots,m\}.$

We denote the set of all polymorphisms from $\rel{A}$ to $\rel{B}$ by $\Pol(\rel{A},\rel{B})$ and define $\Pol(\rel C) = \Pol(\rel C, \rel C)$.
\end{definition}

The computational complexity of a PCSP depends only on the set of polymorphisms of its template~\cite{BG21}. We note that tractability of the PCSPs in \cref{thm:bg} stems from nice polymorphisms: parities (item~(a)), majorities (item~(b)), and alternating thresholds (item~(c)). 

The set of polymorphisms is an algebraic object named minion in \cite{BBKO21}, which we define in \cref{def:minion} below.

\begin{definition}
An $n$-ary function $f^\pi : A^n \to B$ is called a \emph{minor} of an $m$-ary function $f:A^m \to B$ given by a map $\pi:[m]\to [n]$ if
\[ f^\pi(x_1,x_2,\dots,x_n) =f(x_{\pi(1)},x_{\pi(2)}\dots,x_{\pi(m)})\] for all $x_1,x_2,\dots,x_n \in A$.
\end{definition}

\begin{definition} \label{def:minion}
Let $\mathcal{O}(A,B)=\{f:A^n\to B:n\geq 1\}$. A \emph{minion} on $(A,B)$ is a non-empty subset $\mathcal{M}$ of $\mathcal{O}(A,B)$ that is closed under taking minors. For fixed $n\geq 1$, let $\mathcal{M}^{(n)}$ denote the set of $n$-ary functions from $\mathcal{M}$.
\end{definition}

As mentioned, $\mathcal{M} = \Pol(\rel A,\rel B)$ is always a minion and the complexity of $\PCSP(\rel A,\rel B)$ depends only on $\mathcal{M}$. This result was strengthened in~\cite{BBKO21} (generalizing the same result for CSPs~\cite{BOP18}) as follows.

\begin{definition}
Let $\mathcal{M}$ and $\mathcal{N}$ be two minions. A mapping $\xi: \mathcal{M} \to \mathcal{N}$ is called a \emph{minion homomorphism} if it preserves arities and preserves taking minors, i.e., $\xi(f^{\pi}) = (\xi(f))^{\pi}$ for every $f \in \mathcal{M}^{(m)}$ and every $\pi: [m] \to [n]$.
\end{definition}

\begin{theorem} \label{thm:minion}
Let $(\rel A, \rel B)$ and $(\rel A', \rel B')$ be PCSP templates. If there exists a minion homomorphism $\Pol(\rel A', \rel B'
) \to \Pol(\rel A,\rel B)$, then $\PCSP(\rel A,\rel B)$ is log-space reducible to $\PCSP(\rel A',\rel B')$. 
\end{theorem}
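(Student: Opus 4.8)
The statement to be proved is Theorem~\ref{thm:minion}: a minion homomorphism $\Pol(\rel A', \rel B') \to \Pol(\rel A, \rel B)$ yields a log-space reduction $\PCSP(\rel A, \rel B) \leq \PCSP(\rel A', \rel B')$. The natural approach is the standard ``reduction via gadget replacement'' argument, generalizing the $\CSP$ case from~\cite{BOP18} to the promise setting as in~\cite{BKO19,BBKO}. The key object is the \emph{free structure} of the minion $\mathcal M = \Pol(\rel A', \rel B')$ generated by $\rel A$ (sometimes written $\mathbb F_{\mathcal M}(\rel A)$), whose elements are the $|A|$-ary members of $\mathcal M$ and whose relations are interpreted coordinatewise from the tuples of $R_i^{\rel A}$. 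The reduction sends an input structure $\rel X$ (an instance of $\PCSP(\rel A, \rel B)$) to the instance of $\PCSP(\rel A', \rel B')$ obtained, roughly, by replacing $\rel X$ with a structure built so that homomorphisms from it to $\rel A'$ correspond to ``$\mathcal M$-assignments'' on $\rel X$; one convenient formulation is to take the reduct coming from the minion test / the structure whose homomorphisms to $\rel A'$ are exactly $\Pol^{(|X|)}$-style objects compatible with $\rel X$.

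\textbf{Key steps.} First I would recall (or set up) the free structure construction: given a minion $\mathcal M$ on $(A',B')$ and a finite structure $\rel X$, define $\mathbb F_{\mathcal M}(\rel X)$ with domain $\mathcal M^{(|X|)}$ and relations lifted from $\rel X$, and observe two facts: (i) $\rel X \to \mathbb F_{\mathcal M}(\rel X)$ always (via the projections, which lie in any minion), and (ii) homomorphisms $\mathbb F_{\mathcal M}(\rel X) \to \rel A'$ are in bijection with ``$\mathcal M$-solutions'' of $\rel X$. Second, I would define the reduction $\rel X \mapsto \mathbb F_{\Pol(\rel A,\rel B)}(\rel X)$, relying on the minion homomorphism $\xi: \Pol(\rel A',\rel B') \to \Pol(\rel A, \rel B)$ to transport solutions: if $\rel X \to \rel A$, then composing with $\xi$ and the identification of $A$-indexed polymorphisms produces a homomorphism $\mathbb F_{\mathcal M}(\rel X) \to \rel A'$, giving completeness. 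Third, for soundness, I would show the contrapositive: if $\mathbb F_{\mathcal M}(\rel X) \to \rel B'$, then using that $\Pol(\rel A,\rel B)$ contains the relevant polymorphism witnessed by this homomorphism, one builds a homomorphism $\rel X \to \rel B$, so $\rel X \not\to \rel B$ implies $\mathbb F_{\mathcal M}(\rel X) \not\to \rel B'$. Fourth, I would check that $\rel X \mapsto \mathbb F_{\mathcal M}(\rel X)$ is computable in logarithmic space: the domain size is $|\mathcal M^{(|X|)}|$, which is bounded because $A',B'$ are fixed finite sets, and each relation is a simple coordinatewise image of $R_i^{\rel X}$, so the whole structure has size polynomial in $|\rel X|$ and can be written down with only a constant number of counters.

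\textbf{Main obstacle.} The conceptually delicate point — and the one I would spend the most care on — is the correspondence in step (ii)/(iii) between homomorphisms out of the free structure and polymorphisms, together with the precise role of the minion homomorphism $\xi$ in step (iii): one must verify that a homomorphism $\mathbb F_{\mathcal M}(\rel X) \to \rel B'$ genuinely restricts/assembles into an element of $\Pol(\rel A,\rel B)^{(|X|)}$ after applying $\xi$, and that evaluating this polymorphism on the tuple of ``coordinate projections'' recovers an honest homomorphism $\rel X \to \rel B$. This requires being careful about how $\xi$ interacts with minors indexed by the maps that encode the tuples of $\rel X$'s relations, i.e. that the minor-preservation of $\xi$ exactly matches the relational structure of $\mathbb F_{\mathcal M}(\rel X)$. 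The log-space bookkeeping (step four) is routine once the combinatorial content is in place. I would also remark that this theorem is essentially Theorem~4.12 of~\cite{BKO19} / a result of~\cite{BBKO}, so the proof here can be a streamlined recounting rather than a from-scratch development.
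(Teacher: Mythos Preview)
The paper does not prove Theorem~\ref{thm:minion} at all; it is stated as a known result imported from~\cite{BKO19,BBKO} (generalizing~\cite{BOP18}), with no argument given. So there is no in-paper proof to compare your sketch against beyond that citation, and your closing remark that ``this theorem is essentially Theorem~4.12 of~\cite{BKO19} / a result of~\cite{BBKO}'' is exactly how the paper treats it.

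That said, your sketch of the argument contains a genuine error worth flagging. You propose the reduction $\rel X \mapsto \mathbb F_{\mathcal M}(\rel X)$ with domain $\mathcal M^{(|X|)}$ and then assert this domain is ``bounded because $A',B'$ are fixed finite sets.'' It is not bounded: $\mathcal M^{(|X|)} \subseteq (B')^{(A')^{|X|}}$, so its cardinality can be doubly exponential in $|X|$, and the output is not polynomial-size, let alone log-space computable. There is also a signature mismatch: $\mathbb F_{\mathcal M}(\rel X)$ as you describe it lives in the signature of $\rel X$ (that is, of $\rel A$), whereas the output of the reduction must be an instance in the signature of $\rel A'$.

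The construction in~\cite{BKO19,BBKO} builds the free structure on the \emph{template}, not on the instance: one forms $\mathbb F_{\mathcal M}(\rel A)$ with domain $\mathcal M^{(|A|)}$, which has \emph{constant} size because $\rel A$ is fixed. One then shows $\rel A \to \mathbb F_{\mathcal M}(\rel A)$ via projections and, using the minion homomorphism $\xi$, that the corresponding structure on the $\rel B'$-side maps to $\rel B$; this exhibits $(\rel A,\rel B)$ as a homomorphic relaxation of a pp-power of $(\rel A',\rel B')$. The log-space reduction is then the standard pp-power gadget replacement on instances: each variable of $\rel X$ is blown up into a fixed number of new variables and each constraint into a fixed gadget, yielding linear-size output. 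Your discussion of the role of minor-preservation of $\xi$ in the soundness direction is the right concern --- it just needs to be applied at the template level, after which the instance-level reduction is mechanical.
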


An \emph{h1 identity} (where h1 stands for height one) is a meaningful  expression of the form $\mbox{function}(\mbox{variables}) \approx \mbox{function}(\mbox{variables})$, e.g., if $f: A^3 \to B$ and $g: A^4 \to B$, then $f(x,y,x) \approx g(y,x,x,z)$ is an h1 identity. Such an h1 identity is \emph{satisfied} if the corresponding equation holds universally, e.g., $f(x,y,x) \approx g(y,x,x,z)$ is satisfied if and only if $f(x,y,x) = g(y,x,x,z)$ for every $x,y,z \in A$. 

Every minion homomorphism $\xi: \mathcal{M} \to \mathcal{N}$ preserves h1 identities in the sense that if functions $f,g \in \mathcal{M}$ satisfy an h1 identity, then so do their $\xi$-images $\xi(f),\xi(g) \in \mathcal{N}$. In fact, an arity-preserving $\xi$ between minions is a minion homomorphism if and only if it preserves h1 identities (see~\cite{BOP18} for details). In this sense, \cref{thm:minion} shows that the complexity of a PCSP depends only on h1 identities satisfied by polymorphisms.

\subsection{Notation for tuples} \label{subsec:notation}

Repeated entries in tuples will be indicated by $\times$, e.g. $(2 \times a, 3 \times b)$ stands for the tuple $(a,a,b,b,b)$.

The $i$-th \emph{cyclic shift} of a tuple $(x_1, \dots, x_m)$ is the tuple  
\[(x_{(m-i \mod m)+1}, \dots, x_m,x_1, \dots, x_{(m-i-1 \mod m)+1}).
\]
A \emph{cyclic shift} is the $i$-th cyclic shift for some $i$. We will use cyclic shifts both for tuples of zeros and ones and tuples of variables. 

We will often use special $p$-tuples and $n=p^2$-tuples of zeros and ones as arguments for Boolean functions, where $p$ will be a fixed prime number.   For $0 \leq k \leq p$, $0 \leq l \leq p^2$, and $0 \leq k^1, \dots, k^p \leq p$ we write
\[
\brap{k} = (k \times 1, (p-k) \times 0) =  (\underbrace{1,1, \dots, 1}_{k}, \underbrace{0,0, \dots,0}_{p-k}), \quad
\bran{l} = (\underbrace{1,1, \dots, 1}_{l}, \underbrace{0,0, \dots,0}_{n-l})
\]
and 
\[
\brap{k^1, k^2, \dots, k^p} = \brap{k^1}\brap{k^2}\dots\brap{k^p}
\]
for the concatenation of $\brap{k^1}$, $\brap{k^2}$\dots, $\brap{k^p}$. (Note here that the ``$i$'' in $k^i$ is an index, not an exponent.) The subscripts $p$ and $n$ in $\brap{}$ and $\bran{}$ will be usually clear from the context and we omit them.
We will sometimes need to shift $n$-ary tuples $\bra{k^1, k^2, \dots, k^p}$ blockwise, e.g., to $\bra{k^2, \dots, k^p,k^1}$. In such a situation we talk about a \emph{$p$-ary cyclic shift} to avoid confusion. 

It will be often convenient to think of an $n$-tuple $\vc{k} = \bra{k^1, k^2, \dots, k^p}$ as a $p \times p$ zero-one matrix with columns $\bra{k^1}$, $\bra{k^2}$,\dots, $\bra{k^p}$. For example,  $\bra{p \times 3}$ is the $p \times p$ matrix whose ones form a $3 \times p$ ``rectangle''. As another example, the ones in $\bra{(p-2) \times 3, 2 \times 2}$ form ``almost'' a $3 \times p$ rectangle -- the bottom right $1 \times 2$ corner is removed. For $p=5$ we have the following matrix.
\[
\left(\begin{array}{ccccc} 1 & 1 & 1 & 1 & 1 \\ 1 & 1 & 1 & 1 & 1 \\
1 & 1 & 1 & 0 & 0 \\ 0 & 0 & 0 & 0 &0\\ 0 & 0 & 0 & 0 & 0 
\end{array}\right)
\]
A $p$-ary cyclic shift of $\vc{k}$ corresponds to a cyclic permutation of columns.

The area of a zero-one $n$-tuple $\vc{k}$ is defined as the fraction of ones and is denoted $\lambda(\vc{k})$. 
\[
\lambda(\vc{k}) = \left(\sum_{i=1}^n k_i\right) / p^2
\]
The area of $\bra{k^1, k^2, \dots, k^p}$ is thus $(k^1+k^2+\dots+k^p)/p^2$.

If $t$ is a $p$-ary function, we simply write $t \bra{k}$  instead of $t(\bra{k})$. Similar shorthand is used for $n$-ary functions and tuples  $\brap{k^1,k^2, \dots, k^p}$.

\section{Finitely tractable PCSPs}

\subsection{Finite tractability depends only on h1 identities}

We start by observing that finite tractability also depends only on h1 identities satisfied by polymorphisms, just like standard tractability (recall the discussion about h1 identities and minion homomorphisms below \cref{thm:minion}). This result, \cref{thm:h1}, is an immediate consequence of the following lemma and \cref{thm:minion}.  

\begin{lemma} \label{lem:FTandMinions}
Let $(\rel A,\rel B)$ be a PCSP template. Then the following are equivalent.
\begin{itemize}
    \item $\PCSP(\rel A, \rel B)$ is finitely tractable.
    \item There exists a finite relational structure $\rel{C}$ such that $\CSP(\rel{C})$ is solvable in polynomial time and there exists a minion homomorphism $\Pol(\rel{C}) \to \Pol(\rel A, \rel B)$.
\end{itemize}
\end{lemma}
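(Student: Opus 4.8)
The statement is an equivalence, so the plan is to establish the two implications separately. The forward implication is routine: suppose $\PCSP(\rel A,\rel B)$ is finitely tractable, witnessed by a finite structure $\rel C$ with homomorphisms $p\colon\rel A\to\rel C$ and $q\colon\rel C\to\rel B$ and with $\CSP(\rel C)$ tractable. I would define $\xi\colon\Pol(\rel C)\to\Pol(\rel A,\rel B)$ on an $n$-ary polymorphism $f$ by $\xi(f)(a_1,\dots,a_n)=q\bigl(f(p(a_1),\dots,p(a_n))\bigr)$, and check that $\xi(f)$ is a polymorphism from $\rel A$ to $\rel B$ (apply $p$ coordinatewise, then $f$, then $q$) and that $\xi$ preserves arities and minors; hence $\xi$ is a minion homomorphism.

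For the converse, we are given a finite $\rel C$ with $\CSP(\rel C)$ tractable and a minion homomorphism $\xi\colon\Pol(\rel C)\to\Pol(\rel A,\rel B)$, and we must produce a finite $\rel C'$ with $\rel A\to\rel C'\to\rel B$ and $\CSP(\rel C')$ tractable. The plan is to let $\rel C'$ be the free structure $\mathbb{F}_{\Pol(\rel C)}(\rel A)$ of the clone $\Pol(\rel C)$ generated by $\rel A$ (see~\cite{BBKO}): writing $A=[n]$, its domain is the set $\Pol(\rel C)^{(n)}$ of $n$-ary polymorphisms of $\rel C$ -- this is \emph{finite} since $\rel C$ is -- and for each relation symbol $R$ of arity $k$ with $R^{\rel A}=\{\vc{a}_1,\dots,\vc{a}_m\}$ one declares $(f_1,\dots,f_k)\in R^{\rel C'}$ iff there is an $m$-ary $g\in\Pol(\rel C)$ with $f_i(x_1,\dots,x_n)=g\bigl(x_{\vc{a}_1[i]},\dots,x_{\vc{a}_m[i]}\bigr)$ for every $i\in[k]$. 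I would then verify three things. (i) $\rel A\to\rel C'$, by sending $l\in A$ to the $l$-th $n$-ary projection; this uses only that $\Pol(\rel C)$ contains the projections. (ii) $\rel C'\to\rel B$, by sending $f\in\Pol(\rel C)^{(n)}$ to $\xi(f)(1,2,\dots,n)\in B$; that this is a homomorphism follows by unwinding the definition of $R^{\rel C'}$, applying $\xi$ to the witness $g$, using that $\xi$ preserves minors, and using that $\xi(g)$, being a polymorphism from $\rel A$ to $\rel B$, sends $\vc{a}_1,\dots,\vc{a}_m$ (applied columnwise) into $R^{\rel B}$. (iii) $\CSP(\rel C')$ is tractable: I would exhibit a minion homomorphism $\Pol(\rel C)\to\Pol(\rel C')$ sending a $d$-ary $h$ to the $d$-ary operation on $\Pol(\rel C)^{(n)}$ that maps $(f_1,\dots,f_d)$ to the $n$-ary polymorphism $\vc{x}\mapsto h\bigl(f_1(\vc{x}),\dots,f_d(\vc{x})\bigr)$; this is well-defined and actually a polymorphism of $\rel C'$ precisely because $\Pol(\rel C)$ is closed under composition. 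Theorem~\ref{thm:minion} then reduces $\CSP(\rel C')=\PCSP(\rel C',\rel C')$ to $\CSP(\rel C)=\PCSP(\rel C,\rel C)$, which is tractable.

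The main obstacle is choosing the right structure to build and noticing that the free structure over $\rel A$ of the \emph{clone} $\Pol(\rel C)$ -- rather than of some arbitrary minion -- does everything at once: it receives $\rel A$ through projections (i), it admits a homomorphism to $\rel B$ exactly because of the hypothesized minion homomorphism (ii), and, being built from a clone, it carries the canonical composition minion homomorphism from $\Pol(\rel C)$ (iii), so that polynomial-time solvability transfers from $\CSP(\rel C)$ via Theorem~\ref{thm:minion}. Properties (i) and (ii) are the standard ``Galois'' properties of free structures from~\cite{BBKO}; the rest is bookkeeping with minors.
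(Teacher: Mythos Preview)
Your proof is correct and, in fact, more explicit than the paper's. The paper dispatches the lemma in two sentences by invoking an intermediate notion you do not use: it cites~\cite{Bar19} for the equivalence of finite tractability with the statement that some finite tractable $(\rel C,\rel C)$ \emph{pp-constructs} $(\rel A,\rel B)$, and then cites Theorem~4.12 of~\cite{BBKO} for the equivalence of pp-constructibility with the existence of a minion homomorphism $\Pol(\rel C)\to\Pol(\rel A,\rel B)$.

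Your argument takes a different, more direct route. The forward implication via $\xi(f)=q\circ f\circ p^{\times n}$ is the standard one (and is in any case what underlies one direction of the cited results). For the converse, rather than going through pp-constructions, you build the free structure $\rel C'=\mathbb{F}_{\Pol(\rel C)}(\rel A)$ explicitly and verify the three required properties by hand; in effect you are unfolding, for this particular situation, the machinery that~\cite{BBKO} packages into their Theorem~4.12. The advantage of your approach is that it is self-contained modulo Theorem~\ref{thm:minion} (which you still need for the tractability transfer in step~(iii)) and makes transparent exactly where the hypothesis that $\Pol(\rel C)$ is a \emph{clone} -- not merely a minion -- is used, namely in showing that the composition map $h\mapsto\bigl((f_1,\dots,f_d)\mapsto h(f_1,\dots,f_d)\bigr)$ lands in $\Pol(\rel C')$. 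The paper's approach is shorter and situates the lemma within the broader pp-construction framework, at the cost of relying on two external black boxes.
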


\begin{proof}
This lemma is a consequence of known results and we only sketch the argument here. In Section II.B of~\cite{Bar19} it is argued that the first item is equivalent to the claim that a finite tractable template $(\rel C,\rel C)$ pp-constructs $(\rel A,\rel B)$. The latter claim is equivalent to the second item by Theorem 4.12 in~\cite{BBKO21}.
\end{proof}

\begin{theorem} \label{thm:h1}
Let $(\rel A, \rel B)$ and $(\rel A', \rel B')$ be PCSP templates. If there exists a minion homomorphism $\Pol(\rel A', \rel B'
) \to \Pol(\rel A,\rel B)$ and $\PCSP(\rel A',\rel B')$ is finitely tractable, then so is $\PCSP(\rel A,\rel B)$.
\end{theorem}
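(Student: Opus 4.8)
The plan is to derive Theorem~\ref{thm:h1} directly from Lemma~\ref{lem:FTandMinions} together with Theorem~\ref{thm:minion}, using only the fact that minion homomorphisms compose. So assume the hypotheses: there is a minion homomorphism $\xi\colon \Pol(\rel A', \rel B') \to \Pol(\rel A, \rel B)$, and $\PCSP(\rel A', \rel B')$ is finitely tractable.

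First I would apply the (second item of the) characterization in Lemma~\ref{lem:FTandMinions} to $(\rel A', \rel B')$: finite tractability gives a finite relational structure $\rel C$ such that $\CSP(\rel C)$ is solvable in polynomial time and there is a minion homomorphism $\zeta\colon \Pol(\rel C) \to \Pol(\rel A', \rel B')$. Next, I would observe that the composition $\xi \circ \zeta\colon \Pol(\rel C) \to \Pol(\rel A, \rel B)$ is again a minion homomorphism: it clearly preserves arities (each factor does), and it preserves taking minors since $\xi(\zeta(f^\pi)) = \xi(\zeta(f)^\pi) = (\xi(\zeta(f)))^\pi$ for every $f \in \Pol(\rel C)^{(m)}$ and every $\pi\colon [m] \to [n]$, using the minor-preservation property of $\zeta$ and then of $\xi$ in turn. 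Finally, I would apply the backward direction of Lemma~\ref{lem:FTandMinions} to $(\rel A, \rel B)$: we have exhibited a finite structure $\rel C$ with $\CSP(\rel C)$ polynomial-time solvable and a minion homomorphism $\Pol(\rel C) \to \Pol(\rel A, \rel B)$, which is exactly the second item of the lemma, hence $\PCSP(\rel A, \rel B)$ is finitely tractable.

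There is essentially no obstacle here; the statement is a formal consequence of Lemma~\ref{lem:FTandMinions}, which has absorbed all the real content (the equivalence with pp-constructibility and Theorem~4.12 of~\cite{BBKO}). The only point that needs a sentence of justification is that the class of minion homomorphisms is closed under composition, which is immediate from the definition. Note that Theorem~\ref{thm:minion} is not even strictly needed for this particular implication, though it is what makes the companion statement about reductions go through; the honest one-line summary is simply that Theorem~\ref{thm:h1} is the ``transitivity'' form of the equivalence recorded in Lemma~\ref{lem:FTandMinions}.

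\begin{proof}
Assume there is a minion homomorphism $\xi\colon \Pol(\rel A', \rel B') \to \Pol(\rel A, \rel B)$ and that $\PCSP(\rel A', \rel B')$ is finitely tractable. By Lemma~\ref{lem:FTandMinions} applied to $(\rel A', \rel B')$, there is a finite relational structure $\rel C$ such that $\CSP(\rel C)$ is solvable in polynomial time and there is a minion homomorphism $\zeta\colon \Pol(\rel C) \to \Pol(\rel A', \rel B')$. The composition $\xi \circ \zeta\colon \Pol(\rel C) \to \Pol(\rel A, \rel B)$ preserves arities, and for every $f \in \Pol(\rel C)^{(m)}$ and every $\pi\colon [m] \to [n]$ we have $(\xi\circ\zeta)(f^\pi) = \xi(\zeta(f)^\pi) = (\xi(\zeta(f)))^\pi = ((\xi\circ\zeta)(f))^\pi$, so $\xi\circ\zeta$ is a minion homomorphism. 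Applying the reverse implication of Lemma~\ref{lem:FTandMinions} to $(\rel A, \rel B)$, with this same structure $\rel C$, we conclude that $\PCSP(\rel A, \rel B)$ is finitely tractable.
\end{proof}
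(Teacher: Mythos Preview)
Your proof is correct and matches the paper's approach: the paper simply states that Theorem~\ref{thm:h1} is an immediate consequence of Lemma~\ref{lem:FTandMinions} (and Theorem~\ref{thm:minion}) without writing out any further details, and your argument is exactly the intended unpacking via composition of minion homomorphisms. Your observation that Theorem~\ref{thm:minion} is not strictly needed here is also accurate.
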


\subsection{Necessary condition for finite tractability}

In this subsection, we derive the necessary condition for finite tractability that will be used to prove \cref{thm:main}. A cyclic polymorphism is a starting point for the condition.

\begin{definition}
A function $c:A^p\rightarrow B$ is called \emph{cyclic} if it satisfies the h1 identity
\begin{center}
$c(x_1,x_2,\dots,x_p) \approx c(x_2,\dots ,x_p,x_1)$.
\end{center}
\end{definition}

Cyclic polymorphisms can be used~\cite{BK12} to characterize the borderline between tractable and NP-complete CSPs proposed in~\cite{BJK05} and confirmed in~\cite{Bul17,Zhu17,Zhu20}. We only state the direction needed in this paper.

\begin{theorem}[\cite{BK12}]\label{thm:cyclic}
Let $\rel{C}$ be a CSP template over a finite domain $C$. If $\mathrm{CSP}(\rel{C})$ is not NP-complete, then $\rel{C}$ has a cyclic polymorphism of arity $p$ for every prime number $p>|C|$. 
\end{theorem}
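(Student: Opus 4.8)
The plan is to derive this statement from the structure theory of finite idempotent algebras, namely the theorem of Bulatov, Jeavons, and Krokhin characterising the NP-completeness boundary by the existence of a Taylor polymorphism, combined with the absorption machinery of Barto and Kozik. (This is precisely the cyclic term theorem of \cite{BK12}, so the task is really to recall how that machinery is put together.)

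First I would reduce to a finite idempotent Taylor algebra. Passing from $\rel{C}$ to its core $\rel{C'}$ does not change the CSP, and a cyclic polymorphism $t'$ of $\rel{C'}$ of arity $p$ yields one of $\rel{C}$ of the same arity, namely $t(x_1,\dots,x_p) := \iota(t'(e(x_1),\dots,e(x_p)))$, where $e:\rel{C}\to\rel{C'}$ is a retraction and $\iota:\rel{C'}\hookrightarrow\rel{C}$ the inclusion; the homomorphism property and cyclicity are both preserved. Since $\rel{C'}$ is a core, expanding it by all singleton unary relations does not change the complexity of the CSP and forces every polymorphism to be idempotent, so it suffices to treat the finite idempotent algebra $\mathbb{A}$ of polymorphisms of this expansion, which lives on a set of size $|C'|\le|C|<p$ and whose CSP is not NP-complete. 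By the Bulatov--Jeavons--Krokhin theorem this forces $\mathbb{A}$ to be a Taylor algebra. The goal thus becomes: every finite idempotent Taylor algebra $\mathbb{A}$ has a cyclic term of arity $p$ for each prime $p>|A|$.

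Next I would set up the relational object encoding a cyclic term. By Mar\'oti--McKenzie, $\mathbb{A}$ has a weak near-unanimity term, and composing copies of it one obtains a weak near-unanimity term $w$ of arity exactly $p$. Now an arity-$p$ cyclic term of $\mathbb{A}$ is the same thing as a constant tuple inside a bounded-size relation: letting $\sigma$ denote the cyclic shift of the $p$ coordinates of $\mathbb{A}^p$, and letting $R\le\mathbb{A}^p$ be the subalgebra generated by the $\sigma$-orbit of a tuple $\mathbf{a}\in A^p$ that lists every element of $A$ (possible since $p>|A|$), one checks that $R$ is subdirect in $\mathbb{A}^p$ and $\sigma$-invariant, and that a constant tuple in $R$ can be turned — patching together, with the help of $w$ and idempotency, the finitely many choices of $\mathbf{a}$ — into a genuine cyclic term of arity $p$. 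Everything therefore reduces to showing that every $\sigma$-invariant subdirect subalgebra of $\mathbb{A}^p$ contains a constant tuple.

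Proving this last statement is the heart of the matter and the step I expect to be the main obstacle. The plan is an induction on $|A|$ driven by Barto--Kozik's absorption theory. If $\mathbb{A}$ has a proper absorbing subalgebra $\mathbb{B}$, replace $R$ by $R\cap\mathbb{B}^p$, which stays $\sigma$-invariant and, by the Absorption Theorem, stays subdirect over $\mathbb{B}$, and apply the inductive hypothesis. If $\mathbb{A}$ has a nontrivial congruence $\theta$, project $R$ into $(\mathbb{A}/\theta)^p$, obtain a constant tuple there by induction, and lift it back inside a single $\theta$-block using linking and absorption arguments. If $\mathbb{A}$ is simple with no proper absorbing subalgebra, appeal to the Barto--Kozik loop lemma: the digraph on $A$ obtained from a suitable binary projection of $R$ is smooth (automatic from subdirectness) and has algebraic length $1$ — and this is exactly where $p$ prime and $p>|A|$ is used, since the arithmetic of a cyclic orbit of prime length $p>|A|$ rules out the bipartite-type obstructions responsible for the failure of cyclic terms of small or composite arity — so the loop lemma yields a loop $(c,c)$ and hence a constant tuple. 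Making the three cases exhaustive and pushing the absorption- and congruence-lifting steps through is precisely the technical content of \cite{BK12} and uses the full strength of the Absorption Theorem and of the loop lemma for Taylor algebras.
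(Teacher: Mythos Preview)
The paper does not prove this theorem at all: it is quoted from \cite{BK12} as a black box and used only through its consequence that a tractable finite template has cyclic polymorphisms of all large prime arities. There is therefore no ``paper's own proof'' to compare against.

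Your outline is a faithful high-level summary of how the argument in \cite{BK12} actually goes --- reduce to the idempotent core, invoke Bulatov--Jeavons--Krokhin to get a Taylor algebra, encode the existence of a cyclic term as a constant tuple inside a $\sigma$-invariant subdirect subpower, and then attack that last statement by induction on $|A|$ using absorption, congruences, and the loop lemma in the absorption-free simple case. One small correction: in the step where you ``patch together'' the constant tuples coming from different generating tuples $\mathbf{a}$, you do not need Mar\'oti--McKenzie or a weak near-unanimity of arity exactly $p$ for this; rather, the point is that a single constant tuple in the subalgebra generated by the $\sigma$-orbit of one $\mathbf{a}$ listing all elements already yields a cyclic term on all of $A$, since every $p$-tuple over $A$ is obtained from such an $\mathbf{a}$ by identifying coordinates. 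Apart from that, your sketch is accurate, and you are right that the substantial work --- making the absorption/congruence/simple trichotomy exhaustive and pushing the lifting arguments through --- is exactly the technical content of \cite{BK12} and cannot be reproduced in a paragraph.
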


Polymorphism minions of CSP templates are closed under arbitrary composition~(cf. \cite{BKW17}). In particular, if $\CSP(\rel{C})$ is not NP-complete, then $\Pol(\rel{C})$ contains the function
\begin{align}\label{eq:t}
\begin{split}
t&(x_{11},x_{21}, \dots, x_{p1}, \ x_{12}, x_{22}, \dots, x_{p2}, \ \dots, \ x_{1p}, x_{2p}, \dots, x_{pp}) \\
&= 
c(c(x_{11}, x_{21}, \dots, x_{p1}), c(x_{12}, x_{22}, \dots, x_{p2}), \dots, c(x_{1p}, x_{2p}, \dots, x_{pp})),
\end{split}
\end{align}
where $c$ is a $p$-ary cyclic function and $p > |C|$.
Such a function satisfies strong h1 identities which are not satisfied by the templates in \cref{thm:main}.  We now (in two steps) describe one such collection of strong enough identities.

\begin{definition} \label{def:doublycyclic}
A function $t: A^{p^2} \to B$ is \emph{doubly cyclic} if it satisfies every identity of the form $t(\vc{x}_1, \vc{x}_2,\dots, \vc{x}_p) \approx t(\vc{y}_1,\vc{y}_2, \dots, \vc{y}_p)$, where $\vc{x}_i$ is a $p$-tuple of variables and $\vc{y}_i$ is a cyclic shift of $\vc{x}_i$ for every $i \in [p]$, and every identity of the form $t(\vc{x}_1, \vc{x}_2 \dots, \vc{x}_p) \approx t(\vc{x}_2, \dots, \vc{x}_p, \vc{x}_1)$, where each $\vc{x}_i$ is a $p$-tuple of variables. 
\end{definition}

Observe that $t$ from \cref{eq:t} is doubly cyclic -- the first type of identities come from the cyclicity of the inner $c$ while the second type from the outer $c$. 
It will be also useful for us to observe in \cref{tscyclic} that, after rearranging the arguments (we read them row-wise), $t$ is a cyclic function of arity $p^2$. 
From the finiteness of the domain $C$ we get one more property of function $t$.
In the next definition, by an \emph{$x/y$-tuple} we mean a tuple containing only variables $x$ and $y$.

\begin{definition} \label{def:bounded}
A doubly cyclic function $t: A^{p^2} \to B$ is \emph{$\bbb$-bounded} if there exists an equivalence relation $\sim$ on the set of all $p$-ary $x/y$-tuples with at most $\bbb$ equivalence classes such that $t$ satisfies every  identity of the form
$t(\vc{u}_1,\vc{u}_2, \dots \vc{u}_p) \approx t(\vc{v}_1,\vc{v}_2, \dots, \vc{v}_p)$
where $\vc{u}_i$ and $\vc{v}_i$ are $x/y$-tuples such that $\vc{u}_i \sim \vc{v}_i$ for every $i \in [p]$.
\end{definition}

\begin{lemma} \label{lem:bounded}
Let $c: C^p \to C$ be a cyclic function. Then the function $t$ defined by \cref{eq:t} is a $\bbb$-bounded doubly cyclic function for $\bbb = |C|^{|C|^2}$. 
\end{lemma}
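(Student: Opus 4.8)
The plan is to exhibit the equivalence relation $\sim$ explicitly and to read $\bbb$-boundedness off the structure of the formula \cref{eq:t}. Recall $t(\vc{x}_1,\dots,\vc{x}_p) = c(c(\vc{x}_1),\dots,c(\vc{x}_p))$ where each $\vc{x}_i$ is a $p$-tuple. The key observation is that the value $t(\vc{x}_1,\dots,\vc{x}_p)$ depends on the tuple $(\vc{x}_1,\dots,\vc{x}_p)$ only through the $p$-tuple $\bigl(c(\vc{x}_1),\dots,c(\vc{x}_p)\bigr) \in C^p$. So if I want to know which substitutions of $x/y$-tuples into the $i$-th block leave $t$ unchanged regardless of what is plugged into the other blocks, I should declare two $p$-ary $x/y$-tuples $\vc{u}$ and $\vc{v}$ equivalent precisely when $c(\vc{u})$ and $c(\vc{v})$ agree as functions of $(x,y)$.

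Concretely, I would define $\sim$ on the set of $p$-ary $x/y$-tuples by: $\vc{u} \sim \vc{v}$ if and only if for all $x,y \in C$ we have $c(\vc{u}(x,y)) = c(\vc{v}(x,y))$, where $\vc{u}(x,y)$ denotes the element of $C^p$ obtained by substituting the domain elements $x,y\in C$ for the formal variables $x,y$ in the tuple $\vc{u}$. This is clearly an equivalence relation. The map sending an $x/y$-tuple $\vc{u}$ to the function $(x,y)\mapsto c(\vc{u}(x,y))$ from $C^2$ to $C$ is injective on $\sim$-classes, so the number of classes is at most $|C^C|^{C^2}$... more precisely at most the number of functions $C^2 \to C$, which is $|C|^{|C|^2}$. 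This gives the bound $\bbb = |C|^{|C|^2}$ claimed in the statement.

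It then remains to verify that $t$ satisfies the required identities. Suppose $\vc{x}_i \sim \vc{y}_i$ for every $i \in [p]$, where the $\vc{x}_i,\vc{y}_i$ are now tuples of formal variables. Fix an arbitrary assignment of domain elements to all the variables occurring; I must show the two sides of $t(\vc{x}_1,\dots,\vc{x}_p) \approx t(\vc{y}_1,\dots,\vc{y}_p)$ evaluate equally. By definition of $t$ via \cref{eq:t}, the left side equals $c(c(\vc{x}_1),\dots,c(\vc{x}_p))$ and the right side equals $c(c(\vc{y}_1),\dots,c(\vc{y}_p))$ under this assignment. Since each $\vc{x}_i,\vc{y}_i$ is an $x/y$-tuple, under the chosen assignment each variable $x$ takes some value, say $a_i$, and each $y$ takes some value $b_i$ (the values may differ between blocks since the $i$-th block may use different variable names — but within a block there are only the two symbols $x,y$, and $\sim$ was defined to be robust over all choices of $a_i,b_i$). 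Hence $c(\vc{x}_i) = c(\vc{x}_i(a_i,b_i)) = c(\vc{y}_i(a_i,b_i)) = c(\vc{y}_i)$ by the definition of $\vc{x}_i \sim \vc{y}_i$, for every $i$. Therefore the $p$-tuples fed into the outer $c$ coincide, so the two sides are equal. Finally, $t$ is doubly cyclic by the observation already recorded in the text (cyclicity of the inner $c$ gives the first family of identities, cyclicity of the outer $c$ gives the second), so $t$ is indeed a $\bbb$-bounded doubly cyclic function.

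I do not expect a serious obstacle here; the main point requiring care is bookkeeping the distinction between formal variables and their domain values, and making sure the definition of $\sim$ quantifies over all possible values of the two symbols so that it genuinely certifies the identities for arbitrary assignments. A secondary minor point is confirming the counting bound: the relevant invariant of a $p$-ary $x/y$-tuple $\vc{u}$ is the function $C^2 \to C$, $(x,y)\mapsto c(\vc{u}(x,y))$, and there are at most $|C|^{|C|^2}$ such functions, which is exactly the stated $\bbb$.
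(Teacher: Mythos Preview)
Your proposal is correct and is essentially the paper's own argument: define $\vc{u}\sim\vc{v}$ by $c(\vc{u})\approx c(\vc{v})$, bound the number of classes by the number of functions $C^2\to C$, and read off the boundedness identities from the structure of \cref{eq:t}. One minor over-generality: in Definition~\ref{def:bounded} all blocks use the \emph{same} two variables $x,y$, so a single assignment $(x,y)\mapsto(a,b)$ is used for every block; your allowance for block-dependent values $(a_i,b_i)$ is unnecessary, but since your $\sim$ quantifies over all $(x,y)\in C^2$ the argument goes through unchanged.
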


\begin{proof}
We define $\sim$ by declaring two $p$-ary $x/y$-tuples $\vc{u}$ and $\vc{v}$ $\sim$-equivalent if $c(\vc{u}) \approx c(\vc{v})$. As there are $\bbb=|C|^{|C|^2}$ binary functions $C^2 \to C$, this equivalence has at most $\bbb$ equivalence classes. By definitions, $t$ is then $\bbb$-bounded and doubly cyclic. 
\end{proof}

The promised necessary condition for finite tractability is now a simple consequence:

\begin{theorem} \label{thm:crit}
Let $(\rel A, \rel B)$ be a finite PCSP template that is finitely tractable.
Then there exists $\bbb$ such that $(\rel A, \rel B)$ has  a $p^2$-ary $\bbb$-bounded doubly cyclic polymorphism for every sufficiently large prime $p$. 
\end{theorem}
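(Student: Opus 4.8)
The plan is simply to chain together the ingredients assembled above. First I would unpack finite tractability: fix a finite structure $\rel C$ together with homomorphisms $g \colon \rel A \to \rel C$ and $h \colon \rel C \to \rel B$ such that $\CSP(\rel C)$ is tractable. Since we assume $\mathrm{P} \neq \mathrm{NP}$, $\CSP(\rel C)$ is in particular not NP-complete, so Theorem~\ref{thm:cyclic} provides, for every prime $p > |C|$, a cyclic polymorphism $c \in \Pol(\rel C)$ of arity $p$. As noted right after Theorem~\ref{thm:cyclic}, the function $t$ obtained from $c$ via \cref{eq:t} then lies in $\Pol(\rel C)$, and by Lemma~\ref{lem:bounded} it is a $\bbb$-bounded doubly cyclic function with $\bbb = |C|^{|C|^2}$. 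The key feature to keep in mind throughout is that this $\bbb$ is determined by $\rel C$ alone and does not grow with $p$.

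Next I would transport $t$ from $\Pol(\rel C)$ into $\Pol(\rel A, \rel B)$ by pre- and post-composing with the homomorphisms. Define $t' \colon A^{p^2} \to B$ by
\[
t'(z_1, \dots, z_{p^2}) = h\bigl(t(g(z_1), \dots, g(z_{p^2}))\bigr).
\]
A routine verification shows $t' \in \Pol(\rel A, \rel B)$: if the columns of an input matrix populate a relation $R$ of $\rel A$, then (since $g$ is a homomorphism) their $g$-images populate the corresponding relation of $\rel C$, then (since $t$ is a polymorphism of $\rel C$) applying $t$ coordinatewise keeps the output tuple in that relation of $\rel C$, and finally (since $h$ is a homomorphism) applying $h$ lands it in the corresponding relation of $\rel B$; this output tuple is exactly the image under $t'$.

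The last point is that $t'$ inherits all the relevant h1 identities of $t$. A doubly cyclic identity, or one of the identities witnessing $\bbb$-boundedness, merely permutes and identifies argument positions: if $t$ satisfies such an identity for all values of its variables in $C$, then substituting $z_i \mapsto g(z_i)$ (with $g(z_i) \in C$) and applying $h$ to both sides yields the very same identity for $t'$ for all values in $A$. Hence $t'$ is doubly cyclic and $\bbb$-bounded, with the same equivalence relation $\sim$ on $p$-ary $x/y$-tuples that witnessed $\bbb$-boundedness of $t$. Thus $(\rel A, \rel B)$ has a $p^2$-ary $\bbb$-bounded doubly cyclic polymorphism for every prime $p > |C|$, i.e., for every sufficiently large prime, which is the claim. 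I do not expect a genuine obstacle in this argument; the one thing that must be stated carefully is precisely this uniformity of $\bbb$ — it would be an error to let the bound depend on $p$, and the order of the steps above (choosing $\rel C$ before invoking Theorem~\ref{thm:cyclic}) is arranged exactly so that it does not.
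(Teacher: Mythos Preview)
Your proof is correct and follows essentially the same approach as the paper: obtain a cyclic $c \in \Pol(\rel C)$ from Theorem~\ref{thm:cyclic}, form the doubly cyclic $\bbb$-bounded $t$ via \cref{eq:t} and Lemma~\ref{lem:bounded}, and transport it into $\Pol(\rel A,\rel B)$ while preserving h1 identities. The only difference is cosmetic: the paper packages the transport step as ``apply a minion homomorphism $\xi\colon \Pol(\rel C)\to\Pol(\rel A,\rel B)$'' obtained from Lemma~\ref{lem:FTandMinions}, whereas you write out that minion homomorphism explicitly as $t \mapsto h \circ t \circ (g,\dots,g)$ and verify by hand that it preserves polymorphisms and h1 identities --- a slightly more self-contained route that avoids citing the abstract lemma.
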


\begin{proof}
If $(\rel A, \rel B)$ is finitely tractable, then, by \cref{lem:FTandMinions}, there exists a minion homomorphism $\xi: \Pol(\rel C) \to \Pol(\rel A,\rel B)$, where $\rel C$ is finite and $\CSP(\rel C)$ is tractable. By \cref{thm:cyclic}, $\rel C$ has a $p$-ary cyclic polymorphism for every sufficiently large prime. Then, by \cref{lem:bounded}, the polymorphism $t$ of $\rel C$ defined by~\cref{eq:t} is a $\bbb$-bounded and doubly cyclic (with the appropriate $\bbb$). As $\xi$ preserves h1 identities, $\xi(t)$ is a $\bbb$-bounded doubly cyclic polymorphism of $(\rel A,\rel B)$. 
\end{proof}

\section{Proof of the main result}
In this section we prove \cref{thm:main}.  Without loss of generality, we consider only templates on the first lines of Cases (1)--(3) of \cref{thm:main} (in particular, $r \leq s/2$) and assume that $r \leq s/2$ in Case (4). The remaining templates can be obtained by swapping zero and one in the domains. Therefore we have the following cases.

\begin{description}
    \item[Case (1)]  $\PCSP((\rins,\ltrins), \diseq)$ where $1 < r < s/2$
    \item[Case (2)] $\PCSP((\lrins,\ltrins), \diseq)$ where $s$ is even, $1 < r = s/2$
    \item[Case (3)]
    $\PCSP((\rins,\ltrins), \diseq)$ where $s$ is even, $1 < r = s/2$, and $r$ is even
    \item[Case (4)] $\PCSP(\rins,\naes)$ where $r \leq s/2$, $s>2$, and $r$ is even or $s$ is odd
\end{description}

The cases will be treated simultaneously and we only distinguish them where necessary. The first relation pair in the template is denoted $(P,Q)$. 

 Striving for a contradiction, suppose that the PCSP is finitely tractable. By \cref{thm:crit} there exists $\bbb$ such that the template has a $p^2$-ary $\bbb$-bounded doubly cyclic polymorphism $t$ for every sufficiently large arity $p^2$. 
We fix such a $\bbb$ and $t$, where $p$ is fixed to a sufficiently large prime $p$ congruent  to 1 modulo $s$ (which is possible by the Dirichlet prime number theorem). How large must $p$ be will be seen in due course. We denote $n=p^2$ and observe that $n \equiv 1 \pmod{s}$ as well. 

Using the cyclicity and double cyclicity we will show that certain $n$-tuples $\vc{z}$ are tame in that $t(\vc{z}) = t \bra{0}$ (recall here the notation in \cref{subsec:notation}) iff the area of $\vc{z}$ is below a threshold $\tresh$. The formal definitions are as follows.

\begin{definition}
The \emph{threshold} $\tresh$ is defined as follows.
\begin{description}
    \item[Case (1),(2),(3)] $\tresh = 1/2$ 
    \item[Case (4)] $\tresh = r/s$
\end{description}
(Observe that $\tresh=r/s$ also in Case (2) and (3).)

A tuple $\vc{z} \in \{0,1\}^n$ is \emph{tame} if
\[
 t(\vc{z}) = \left\{ \begin{array}{ll} 
 t \bran{0} & \mbox{if $\lambda(\vc{z}) < \tresh$} \\ 
 1-t \bran{0} & \mbox{if $\lambda(\vc{z}) > \tresh$} \\ 
 \end{array}\right.
\]
(Note here that $\lambda(\vc{z})$ is never equal to $\tresh$ since $n$ is odd and $n \equiv 1 \pmod{s}$.)
\end{definition}

The evaluations that we use are called near-threshold almost rectangles defined as follows.

\begin{definition}
A tuple $\vc{z} \in \{0,1\}^n$ is an \emph{almost rectangle} if it is a $p$-ary cyclic shift of a tuple of the form $\brap{z^1, \dots,z^1, z^2, \dots, z^2}$, where $0 \leq z^1,z^2 \leq p$, the number of $z^1$'s is arbitrary, and $|z^1 - z^2| < 5\bbb$. The quantity $\dz=|z^1-z^2|$ is called \emph{step size}. We say that $\vc{z}$ is \emph{near-threshold} if $|\lambda(\vc{z})-\tresh| < 1/s^{\dz+3}$. 
\end{definition}

In \cref{subsec:stepone} and \cref{subsec:stepx} we show that near-threshold almost rectangles are tame. The proof will then be finished in \cref{subsec:contra} by finding two near-threshold almost rectangles $\vc{z}_1$ and $\vc{z}_2$ such that $\lambda(\vc{z}_1) < \tresh < \lambda(\vc{z}_2)$ and $t(\vc{z}_1) = t(\vc{z}_2)$, which will contradict the tameness.

\subsection{Step size one} \label{subsec:stepone}

In this subsection we work towards \cref{lem:stepone} which shows that near-threshold almost rectangles of step size at most one are tame. This will provide us with the base case for an inductive proof in the next subsection.

An almost rectangle of step at most one has a $p$-ary cyclic shift of the form $\vc{z} = \brap{z^2+1, \dots, z^2+1,z^2, \dots, z^2}$. Observe that this tuple regarded as a $p \times p$ matrix is, when read row-wise, equal to a sequence of consecutive ones, followed by zeros. Therefore the following lemma will be useful. It shows that the function $\ts$ obtained from $t$ by swapping the roles of rows and columns is cyclic.

\begin{lemma} \label{tscyclic}
Let $t: A^{p^2} \to B$ be a doubly cyclic function. Then the function $\ts$ defined by
\[\ts \left(
\begin{array}{cccc}
x_{11} & x_{12} & \cdots & x_{1p} \\
x_{21} & x_{22} & \cdots & x_{2p} \\
\vdots & \vdots & \ddots & \vdots \\
x_{p1} & x_{p2} & \cdots & x_{pp}
\end{array}
\right) = t\left(
\begin{array}{cccc}
x_{11} & x_{21} & \cdots & x_{p1} \\
x_{12} & x_{22} & \cdots & x_{p2} \\
\vdots & \vdots & \ddots & \vdots \\
x_{1p} & x_{2p} & \cdots & x_{pp}
\end{array}
\right)\]
is a cyclic function.
\end{lemma}

\begin{proof} By cyclically shifting the arguments we get the same result:
\begin{align*}
\begin{split}
 \ts &(x_{21},x_{31},\dots,x_{p1},x_{12},x_{22},x_{32},\dots,x_{p2},x_{13},\dots,x_{2p},x_{3p},\dots,x_{pp},x_{11}) \\
 & =\ts \left(
\begin{array}{cccc}
x_{21} & \cdots & x_{2,p-1} & x_{2p} \\
\vdots & \ddots & \vdots & \vdots \\
x_{p1} & \cdots & x_{p,p-1} & x_{pp}\\
x_{12} & \cdots & x_{1p} & x_{11}
\end{array}
\right)=t \left(
\begin{array}{cccc}
x_{21} & \cdots & x_{p1} & x_{12} \\
\vdots & \ddots & \vdots & \vdots \\
x_{2,p-1} & \cdots & x_{p,p-1} & x_{1p}\\
x_{2p} & \cdots & x_{pp} & x_{11}
\end{array}
\right)\\
 &=t\left(
\begin{array}{cccc}
x_{21} & \cdots & x_{p1} & x_{11} \\
\vdots & \ddots & \vdots & \vdots \\
x_{2,p-1} & \cdots & x_{p,p-1} & x_{1,p-1}\\
x_{2p} & \cdots & x_{pp} & x_{1p}
\end{array}
\right)=t\left(
\begin{array}{cccc}
x_{11} & x_{21} & \cdots & x_{p1} \\
x_{12} & x_{22} & \cdots & x_{p2} \\
\vdots & \vdots & \ddots & \vdots \\
x_{1p} & x_{2p} & \cdots & x_{pp}
\end{array}
\right)\\
&=\ts \left(
\begin{array}{cccc}
x_{11} & x_{12} & \cdots & x_{1p} \\
x_{21} & x_{22} & \cdots & x_{2p} \\
\vdots & \vdots & \ddots & \vdots \\
x_{p1} & x_{p2} & \cdots & x_{pp}
\end{array}
\right)\\
&=\ts(x_{11},x_{21},\dots,x_{p1},x_{12},x_{22},\dots,x_{p2},\dots,x_{1p},x_{2p},\dots,x_{pp}).
\end{split}
\end{align*}
\end{proof}
 
The following lemma is a consequence of the fact that $\ts$ is a polymorphism (as $t$ is) which is, as we have just shown, cyclic.

\begin{lemma} \label{lem:plaus} Let $\bra{k_1},\bra{k_2}, \dots, \bra{k_s}$, where $0 \leq k_i \leq n$, be an $s$-tuple of $n$-tuples such that
\begin{description}
    \item[Cases (1), (3), and (4)] $\sum _{i=1}^s k_i = rn$.
    \item[Case (2)] $\sum _{i=1}^s k_i \leq rn$.
\end{description} Then $(\ts\bra{k_1},\ts\bra{k_2}, \dots, \ts\bra{k_s}) \in Q$. 

Moreover, in Cases (1), (2), and (3) we have $\ts \bra{n-k} = 1-\ts \bra{k}$ for every $0 \leq k \leq n$. 
\end{lemma}

\begin{proof}
For the first part, form an $s\times rn$ matrix $M$ whose first row is $\bra{k_1}_{rn}$ and the $j$-th row is the ($\sum_{l=1}^{j-1} k_l$)-th cyclic shift of $\bra{k_j}_{rn}$ for $j\in\{2,\dots,s\}$. Note that each column of $M$ contains exactly 1 one in Cases (1), (3), (4) and at most 1 one in Case (2). Split this matrix into $r$-many $s\times n$ blocks $M^1,M^2,\dots,M^r$. Their sum $X=\sum_{j=1}^r M^j$ is an $s\times n$ zero-one matrix whose each column contains exactly $r$ ones in Cases (1), (3), (4) and at most $r$ ones in Case (2) --- this is seen from the row sum of $M$, which is $\bra{\sum_{i=1}^s k_i}$. Moreover, for all $j\in [s]$, the $j$-th row of $X$ is a cyclic shift of $\bra{k_j}$, therefore its $\ts$-image is $\ts \bra{k_j}$  by \cref{tscyclic}. Each column belongs to the relation $P$, therefore, as $t^\sigma$ is a polymorphism, we get that $\ts$ applied to the rows gives a tuple in $Q$. This implies the first claim.

For the second part, we take $\bra{k}$ together with the $k$-th cyclic shift of $\bra{n-k}$ and use the fact that $\ts$ preserves the disequality relation pair.
\end{proof}

The next lemma is the technical core of this subsection.

\begin{lemma} \label{lem:stepone}
Denote $a = \lfloor \tresh n  \rfloor$.
For every $0 \leq k \leq 2a$, we have 
\[
 \ts \bran{k} = \left\{ \begin{array}{ll} 
 \ts \bran{0} & \mbox{if $0 \leq k  \leq a$} \\ 
 1-\ts \bran{0} & \mbox{if $1+a \leq k  \leq 2a$} \\ 
 \end{array}\right.
\]
\end{lemma}

\begin{proof}
\textbf{Case (1).} Here $a= \lfloor n/2 \rfloor = (n-1)/2$. 
We prove $\ts \bra{k} = 0$ and $\ts \bra{n-k} = 1$ for any $0 \leq k \leq a$ by induction on $i = a - k$, $i = 0, 1, \dots, a$. For the first step, $k = (n-1)/2$, we apply \cref{lem:plaus} to the $s$-tuple  $2r \times \bra{k}, \bra{r}, (s-2r-1) \times \bra{0}$. (Note that $r<s/2$, so $s-2r-1 \geq 0$, and that we can apply \cref{lem:plaus} because $2rk+r=rn$.) Since $Q$, which is $\ltrins$, contains no tuple with more than $(2r-1)$ ones, we get $\ts \bra{k} = 0$. Then also $\ts \bra{n-k}=1$ by the second part of the lemma. For the induction step, we use
the tuple 
\[
r \times \bra{k}, r \times \bra{n-k-1}, \bra{r}, (s-2r-1) \times \bra{0}
\] 
in a similar way, additionally using that $\ts \bra{n-k-1} = 1$ by the induction hypothesis. 

\textbf{Case (2).}
    If $0 \leq k \leq a$, then we apply \cref{lem:plaus} to the $s$-tuple $\bra{k}, \bra{k}, \dots, \bra{k}$ (we can do that because $sk \leq sa = s\lfloor n/2 \rfloor \leq sn/2 = rn$) and we get that the tuple $(\ts\bra{k},\ts\bra{k},\dots, \ts\bra{k})$ is in $\ltrins$; therefore $\ts\bra{k}=0$. For the remaining values $2a \geq k \geq a+1$ we apply the second part of the same lemma and get $\ts \bra{k}=1$. 

\textbf{Case (3).}   Here $a = \lfloor n/2 \rfloor = \lfloor rn/s \rfloor$.
    As $n \equiv 1 \pmod{s}$ and $r < s$, it follows that $a = r(n-1)/s$. 
    We will prove, starting from the left, the following chain of disequalities.
\[
\ts \bra{a} \neq \ts \bra{a+1} \neq \ts \bra{a-1} \neq \ts \bra{a+2} \neq \ts \bra{a-2} \neq \dots \neq \ts \bra{2a} \neq \ts \bra{0}
\]
This will imply $\ts \bra{a}= \ts \bra{a-1} = \dots = \ts \bra{0} \neq \ts \bra{a+1} = \ts \bra{a+2} = \dots = \ts \bra{2a}$. 
We start with the first disequality $\ts \bra{a} \neq \ts \bra{a+1}$. We apply \cref{lem:plaus} to the $s$-tuple
\[
(s-r) \times \bra{a}, \ r \times \bra{a+1},
\]
which we can as $(s-r)a + r(a+1) = sa+r = sr(n-1)/s + r = rn$.
The $\ts$ image of this tuple belongs to $\ltrins$, so $\ts \bra{a}$ and $\ts \bra{a+1}$ are not both ones. We also apply \cref{lem:plaus} to the $s$-tuple of ``complementary'' tuples 
\[
(s-r) \times \bra{n-a}, \ r \times \bra{n-(a+1)}
\]
and get that $\ts \bra{n-a}$ and $\ts \bra{n-(a+1)}$ are not both ones. From the second part of \cref{lem:plaus} it follows that $\ts \bra{a}$ and $\ts \bra{a+1}$ are not both zeros. We conclude that $\ts \bra{a} \neq \ts \bra{a+1}$.

For the second disequality $\ts \bra{a+1} \neq \ts\bra{a-1}$, we use the sequence
\[
(s-r)/2 \times \bra{a-1}, \ (s+r)/2 \times \bra{a+1}
\]
and derive $\ts \bra{a+1} \neq \ts \bra{a-1}$ using \cref{lem:plaus} in a similar way.

To prove  $\ts \bra{a-i+1} \neq \ts \bra{a+i}$ for $i \in\{2,3,\dots,a\}$, we
observe that, by the already established disequalities, we have $\ts \bra{a-i+1}= \dots = \ts \bra{a}$, 
and then use
\begin{itemize}
    \item $(s+r)/4 \times \bra{a+i}, \ (s-r)/2 \times \bra{a-1}, \ (s+r)/4 \times \bra{a-i+2}$ if $(s+r)/2$ is even; 
    \item $(s+r+2)/4 \times \bra{a+i}, \ (s-r-2)/2 \times \bra{a-1}, \  2 \times \bra{a-i+1}, \ (s+r-6)/4 \times \bra{a-i+2}$ if $(s+r)/2$ is odd.
\end{itemize}
Finally, for proving $\ts \bra{a+i} \neq\ts{\bra{a-i}}$ we use
\[ (s-r)/2 \times \bra{a-i}, \ (s-r)/2 \times \bra{a+i}, \ r \times \bra{a+1}
\]
This completes the proof for Case (3).

\textbf{Case (4).} The proof is similar to Case (3). We again have $a = r(n-1)/s$ and we prove the same sequence of disequalities.
The first disequality  $\ts \bra{a} \neq \ts \bra{a+1}$ is proved using the same sequence $(s-r) \times \bra{a}, \ r \times \bra{a+1}$. This time, however, we directly obtain $\ts \bra{a} \neq \ts \bra{a+1}$ since  $Q$ is the not-all-equal relation, so we do not need to use (and cannot use) the complementary tuples.
The proof proceeds in the same way as in Case (3) if $r$ and $s$ have the same parity. 

Suppose now that $r$ is even and $s$ is odd.
For the second disequality $\ts \bra{a+1} \neq \ts\bra{a-1}$,  we first use
\[
(s-1) \times \bra{a}, \ \bra{a+r}
\]
to deduce $\ts \bra{a+r} \neq \ts \bra{a}$ (so $\ts \bra{a+1} = \ts \bra{a+r}$) and then
\[
(s-1)/2 \times \bra{a-1}, \ (s-1)/2 \times \bra{a+1}, \ \bra{a+r}
\]
to deduce $\ts \bra{a-1} \neq \ts \bra{a+1}$. 

To prove  $\ts \bra{a-i+1} \neq \ts \bra{a+i}$ for $i \in\{2,3,\dots,a\}$,  we use
\[
r/2 \times \bra{a+i}, \ (s-r) \times \bra{a}, \ r/2 \times \bra{a-i+2}
\]
and to prove $\ts \bra{a+i} \neq\ts{\bra{a-i}}$ we use
\[
(s-1)/2 \times \bra{a-i}, \ (s-1)/2 \times \bra{a+i}, \ 1 \times \bra{a+r}
\]
The proof of \cref{lem:stepone} is concluded.
\end{proof}

The goal of this subsection is now an easy consequence.

\begin{lemma} \label{lem:stepsizeone}
Every near-threshold almost rectangle of step size at most one is tame. 
\end{lemma}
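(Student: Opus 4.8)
The plan is as follows. First reduce the statement. Since $t$ maps into $\{0,1\}$ and, in cases (1)--(3), the template contains $\diseq$, the polymorphism $t$ preserves $\neq$; hence $t(\overline{\vc z}) = 1-t(\vc z)$ for the bitwise complement $\overline{\vc z}$, and in particular $t\bran{0}\neq t\bran{n}$. As complementation turns an almost rectangle of step $\leq 1$ into another one and negates its signed distance to $\tresh=1/2$, it suffices in these cases to prove $t(\vc z)=t\bran{0}$ for near-threshold almost rectangles $\vc z$ of step $\leq 1$ with $\lambda(\vc z)<\tresh$. In case (4) (where $\tresh=r/s$ and there is no disequality) one first derives $t\bran{0}\neq t\bran{n}$ from the $\naes$-constraint applied to the $n$ constant columns $(r\times 1,(s-r)\times 0)$, and then argues both inequalities directly.

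The main device is a polymorphism gadget that reduces everything about rectangles $\vc r_z:=\brap{z,\dots,z}$ to a statement about the single Boolean function $h(z):=t(\vc r_z)$. Fix any $w_1,\dots,w_s\in\{0,\dots,p\}$ with $w_1+\dots+w_s=pr$. Since $s$ cyclic intervals of these lengths, laid end to end, cover $\mathbb Z/p$ exactly $r$ times, there is an $s\times n$ matrix whose $n$ columns are periodic with period $p$, all lying in $\rins$, and whose $i$-th row is $p$ consecutive copies of a length-$w_i$ cyclic interval, hence (by the double cyclicity of $t$, shifting each length-$p$ block to sort it) agrees with $\vc r_{w_i}$ in the value of $t$. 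Applying $t$ yields: for every such $(w_i)$, the word $(h(w_1),\dots,h(w_s))$ lies in the relaxed relation -- so at most $2r-1$ ones in case (1), at most $s-1$ ones in cases (2),(3), non-constant in case (4). With the identity $h(p-w)=1-h(w)$ from $\diseq$, this is a rigid condition.

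From it one deduces, after a case analysis, that $h$ is constant on $\{0,\dots,(p-1)/2\}$ -- hence, by $h(p-w)=1-h(w)$, equal to $1-h(0)$ on the upper half -- so every rectangle is tame. One feeds the constraint profiles consisting of many copies of a putative ``wrong'' height $w\leq(p-1)/2$, padded by copies of $0$, of $p$, and of auxiliary heights so the sum hits $pr$; the word then carries too many ones (or, in case (4), is forced constant), a contradiction. The bookkeeping is case-sensitive: in cases (2),(3) the relaxed relation only forbids the all-ones word, and it is the parity hypothesis on $r$ that kills the linear (``XOR'') solution and makes the count close. To extend tameness to step-$1$ almost rectangles $\vc z=\brap{a\times z,b\times(z+1)}$ one uses a two-block variant of the periodic gadget whose rows are $\vc z$ (with large multiplicity) together with rectangles $\vc r_{w_j}$ -- already known tame, hence serving as anchors -- and a controlled number of further step-$1$ tuples needed to meet the two per-block sum conditions; for $z\leq(p-3)/2$ the bordering rectangles $\vc r_z,\vc r_{z+1}$ both lie below $\tresh$ and this suffices, while the ``ultra-near-threshold'' case $z=(p-1)/2$ is handled by a downward induction on the number $b$ of bumped columns, terminating at $\vc r_{(p-1)/2}$.

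The hard part is the combinatorics, not the gadget constructions. A naive gadget in which $\vc z$ is balanced by a single compensating row makes that row exactly $\overline{\vc z}$, so their $t$-values cancel and the constraint becomes vacuous; the compensation must therefore be spread over several rows whose values are already pinned (hence the induction through rectangles), exploiting the cyclic structure of almost rectangles to break this $\vc z\leftrightarrow\overline{\vc z}$ symmetry. Confining the unavoidable ``uncontrolled'' coordinates to a negligible set is exactly what the bound $|\lambda(\vc z)-\tresh|<1/s^{\dz+10}$ (here $\dz\leq 1$) buys, and choosing a prime $p\equiv 1\pmod s$ guarantees $\lambda(\vc z)\neq\tresh$, so the dichotomy defining tameness is well posed.
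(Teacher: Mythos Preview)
Your plan diverges from the paper's at the key reduction. The paper's crucial observation is that an almost rectangle $\vc z=\brap{z^2+1,\dots,z^2+1,z^2,\dots,z^2}$ of step size at most one, viewed as a $p\times p$ matrix and read \emph{row by row}, is a single block of consecutive ones followed by zeros; that is, $t(\vc z)=\ts\bran{k}$ for a suitable $k\in\{0,\dots,n\}$. Since $\ts$ is a cyclic function of arity $n=p^2$ (Lemma~\ref{tscyclic}), one then argues directly with the one-parameter family $k\mapsto\ts\bran{k}$ and plausible $s$-tuples $\bran{k_1},\dots,\bran{k_s}$ with $\sum k_i=rn$ (Lemma~\ref{lem:plaus}). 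Step~$0$ and step~$1$ are handled in one stroke; there is no separate step-$1$ argument at all.

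Your rectangle gadget is correct and amounts to Lemma~\ref{lem:plaus} restricted to $k$'s divisible by $p$ (your $h(w)$ is precisely $\ts\bran{wp}$), and the case analysis you sketch for $h$ would parallel the paper's for $\ts\bran{\,\cdot\,}$. The genuine gap is the step-$1$ extension. You correctly note that pairing $\vc z$ with $\overline{\vc z}$ is vacuous and propose to spread the compensation over already-pinned rows; but any pure rectangle contributes equally to every column, so an $s$-tuple made of copies of $\vc z$ together with rectangles can \emph{never} have constant column sums. You are forced to add at least one further step-$1$ row, and then you need \emph{its} $t$-value. Your ``downward induction on $b$'' does not close this loop: the natural partner that restores constant column sums is (a $p$-ary shift of) $\brap{b\times z',\,(p-b)\times(z'+1)}$, which has $p-b$ bumped columns --- the same distance from $\{0,p\}$ as $\vc z$, not smaller. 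The assertion that when both bordering rectangles $\vc r_z,\vc r_{z+1}$ lie below $\tresh$ ``this suffices'' is left without any mechanism transferring their tameness to $\vc z$. The paper's $\ts$-trick sidesteps the whole difficulty by absorbing step-$1$ into the single index~$k$.
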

\begin{proof}
    Let $\vc{z}$ be a near-threshold almost rectangle of step size at most one. By using an appropriate $p$-ary cyclic shift we can, without loss of generality, assume that it is of the form $\vc{z} = \brap{z^2+1, \dots, z^2+1,z^2, \dots, z^2}$. Denoting $k$ the number of ones in $\vc{z}$, we have $t(\vc{z}) = \ts \bran{k}$. The claim now follows from \cref{lem:stepone} once we observe that $k \leq 2\lfloor \tresh n \rfloor$. Indeed, since $\vc{z}$ is a near-threshold almost rectangle, we have
    \[\left|\lambda(\vc{z})-\tresh\right| < \frac{1}{s^{\dz+3}},\]
    so 
    \[\frac{k}{n} = \lambda(\vc{z}) <\frac{1}{s^{\dz+3}}+\tresh \leq \frac{1}{s^3} + \tresh,\]
    and then $k \leq 2\lfloor \tresh n \rfloor$ follows for a sufficiently large $n$.
\end{proof}

\subsection{Arbitrary step size} \label{subsec:stepx}

The goal of this subsection is to prove that every near-threshold almost rectangle is tame.

The first lemma is a ``2-dimensional analogue'' of \cref{lem:plaus}. We require the following concept.

\begin{definition}
We say that an $m$-tuple of evaluations $\vc{k}_1 = \bra{k^1_1, k^2_1, \dots, k^p_1}$, $\vc{k}_2 = \bra{k^1_2, k^2_2, \dots, k^p_2}$, \dots, $\vc{k}_m = \bra{k^1_m, k^2_m, \dots,$ $k^p_m}$, where $m\in[s]$, is \emph{plausible} if $\sum _{j=1}^m k_j^i = rp$ for all $i\in [p]$.
\end{definition}

In other words, by arranging the integers that define $\vc{k}_1$, $\vc{k}_2$, \dots, $\vc{k}_m$ as rows of an $m \times p$ matrix, we get a matrix whose every column sums up to $rp$. Note that the sum of the areas of the evaluations is then equal to $r$.

\begin{lemma} \label{lem:pl}
If a tuple $\vc{k}_1,\vc{k}_2, \dots, \vc{k}_s$ is plausible, then $(t(\vc{k}_1),t(\vc{k}_2), \dots, t( \vc{k}_s)) \in Q$.

Moreover, in Cases (1), (2), and (3) we have $t \bra{p-k^1,p-k^2, \dots, p-k^p} = 1-t \bra{k^1,k^2, \dots, k^p}$ for any evaluation $\bra{k^1,k^2, \dots, k^p}$.
\end{lemma}

\begin{proof}
Let $\vc{k}_1,\vc{k}_2, \dots, \vc{k}_s$ be a plausible tuple.
Fix, for a while, an arbitrary $i\in [p]$. Form a $s\times rp$ matrix $M_i$ whose first row is $\bra{k_1^i}_{rp}$ and $j$-th row is the ($\sum_{l=1}^{j-1} k_l^i$)-th cyclic shift of $\bra{k_j^i}_{rp}$ for $j\in\{2,\dots,s\}$. Split this matrix into $r$-many $s\times p$ blocks $M_i^1,M_i^2,\dots,M_i^r$. Their sum $X_i=\sum_{j=1}^r M_i^j$ is an $s\times p$ matrix whose each column contains exactly $r$ ones. 
Moreover, for all $j\in [s]$, the $j$-th row of the matrix $X_i$ is a cyclic shift of $\bra{k_j^i}_p$.
Put the matrices $X_1$,$X_2$, \dots, $X_p$ aside to form an $s \times n$ matrix $Y$. Its rows 
have the same $t$-images as $\vc{k}_1,\vc{k}_2, \dots, \vc{k}_s$, respectively, because $t$ is doubly cyclic. Each column belongs to the relation $P$ (as it contains $\rins$), therefore, as $t$ is a polymorphism, we get that $t$ applied to the rows gives  a tuple in $Q$. This tuple is equal to $(t(\vc{k}_1),t(\vc{k}_2), \dots, t( \vc{k}_s))$. 

The second part can be proved in a similar way as the second part of \cref{lem:plaus} using the disequality relation pair.
\end{proof}

The next lemma will be applied to produce plausible sequences of evaluations. 

\begin{lemma} \label{lem:producing_plaus} Let $\vc{z}$ be an almost rectangle of step size $\dz \geq 2$ with  $|\lambda (\vc{z})-\tresh| \leq 1/s^3$. Then
\begin{itemize}
    \item there exists a plausible $r/\tresh$-tuple  $\vc{k}_1$, $\vc{k}_2$,\dots, $\vc{k}_{r/\tresh-1}$, $\vc{l}$ of almost rectangles such that $t(\vc{z}) = t(\vc{k}_1) = t(\vc{k}_2) = \dots = t(\vc{k}_{r/\tresh-1})$, $\lambda(\vc{z}) = \lambda(\vc{k}_1) = \dots = \lambda(\vc{k}_{r/\tresh-1})$, and $\vc{l}$ has the same step size $\dz$ as $\vc{z}$;
  \item there exists a plausible $r/\tresh$-tuple $\vc{k}_1$, $\vc{k}_2$,\dots, $\vc{k}_{r/\tresh-2}$, $\vc{l}_1, \vc{l}_2$ of almost rectangles such that $t(\vc{z}) = t(\vc{k}_1) = t(\vc{k}_2) = \dots = t(\vc{k}_{r/\tresh-2})$, $\lambda(\vc{z}) = \lambda(\vc{k}_1) = \lambda(\vc{k}_2) = \dots = \lambda(\vc{k}_{r/\tresh-2})$, both $\vc{l}_1$ and $\vc{l}_2$ have  step size strictly smaller than $\dz$, and $|\lambda(\vc{l}_1) - \lambda(\vc{l}_2)| \leq 1/p$.
  \end{itemize}
\end{lemma}

Before the proof, note that $r/\tresh = 2r$ in Cases (1), (2), (3)
and $r/\tresh=s$ in Cases (2), (3), (4).

\begin{proof}
Without loss of generality we can assume that $\vc{z}= \bra{c\times z^1,d\times z^2}$ for some $c,d$ and $z^1 > z^2$. Set $m = r/\tresh-1$ for the first item and $m = r/\tresh-2$ for the second one. 
We define an integer $m \times p$ matrix $X$ so that the first row is $(c \times z^1, d \times z^2)$ and the $i$-th row is the $c$-th cyclic shift of the $(i-1)$-st row for each $i \in \{2, \dots, m\}$.
Let $Y$ be the $(m+1) \times p$ matrix obtained from $X$ by adding a row $(l^1, l^2, \dots, l^p)$ so that each column sums up to $rp$.
It is easily seen by induction on $i \leq m$ that the sum of the first $i$ rows is a cyclic shift of a tuple of the form $(e, \dots, e, e', \dots, e')$, where $|e-e'| = \dz$ and the ``step down'' is at position $ci \mod p$ (when the columns are indexed from 0). It follows that $(l^1,l^2, \dots, l^p)$ is also a cyclic shift of a tuple of the form $(e, \dots, e, e', \dots, e')$ where $e$ and $e'$ differ by $\dz$.

Next we observe that each $l^i$ is greater than 0 if $p$ is sufficiently large. Indeed, note that since $|z^1-z^2|/p$ can be made arbitrarily small (recall $|z^1-z^2|<5b)$, we have $p(\lambda(\vc{z})-\epsilon)< z^1, z^2 < p(\lambda(\vc{z}) + \epsilon)$, where $\epsilon>0$ can be made arbitrarily small. Since $z^1, z^2 < p(\lambda(\vc{z}) + \epsilon)$ and for each $i$ we have $l^i=rp-mz^1$ or $l^i=rp-mz^2$, then we have, for each $i$,
\begin{align*}
l^i &> rp - mp(\lambda(\vc{z})+\epsilon) \\
& \geq rp - \left(\frac{r}{\tresh}-1\right)p\left(\tresh+\frac{1}{s^3}+\epsilon\right) \\ 
& = p\left(r-\left(\frac{r}{\tresh}-1\right) \tresh - \left(\frac{r}{\tresh}-1\right)\left(\frac{1}{s^3}+\epsilon\right)\right)\\
& = p\left(\tresh - \left(\frac{r}{\tresh}-1\right)\left(\frac{1}{s^3}+\epsilon\right)\right) \\ 
& > p\left(\tresh - \frac{r}{\tresh} \left(\frac{1}{s^3} + \epsilon\right)\right),
\end{align*}
which is, for a sufficiently small $\epsilon$, greater than 0 since $r/(\tresh s^3) \leq 1/s^2 < \tresh$. Similarly, now using $p(\lambda(\vc{z})-\epsilon)< z^1, z^2$, we get that each $l^i < 2\tresh p \leq p$ if $m = r/\tresh-1$ and $l^i < 3 \tresh p \leq 3p/2$ if $m = r/\tresh -2$.

Now we can finish the proof of the first item. We set
 $\vc{k}_1, \vc{k}_2, \dots, \vc{k}_m, \vc{l}$ to be the $n$-tuples determined by the rows of $Y$ via $\bra{}$, e.g., $\vc{l} = \bra{l^1,l^2,\dots, l^p}$. The inequalities $0 \leq l^i \leq p$ guarantee that $\vc{l}$ is correctly defined and we see, using also the double cyclicity of $t$ (for $t(\vc{z}) = t(\vc{k}_1) = t(\vc{k}_2) = \dots = t(\vc{k}_m)$), that these $n$-tuples have all the required properties.

To finish the proof of the second item, we define the $\vc{k}_i$ as above and set $\vc{l}_1 = \bra{\lfloor l^1/2 \rfloor, \lfloor l^2/2 \rfloor, \dots,\lfloor l^p/2 \rfloor}$, $\vc{l}_2 = \bra{\lceil l^1/2 \rceil,\lceil l^2/2 \rceil, \dots, \lceil l^p/2 \rceil}. $
Since $0 \leq l^i \leq 3p/2$, for $i=1,2,\dots,p$, these tuples are correctly defined almost rectangles. Their areas clearly differ by at most $1/p$. Since $\dz \geq 2$, their step sizes are strictly smaller than $\dz$, and we are done in this case as well.
\end{proof}

Equipped with these lemmata we are ready to reach the goal of this subsection. 

\begin{lemma} \label{lem:main}
Every near-threshold almost rectangle is tame. 
\end{lemma}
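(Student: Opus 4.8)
I would prove the lemma by induction on the step size $\dz$ of the near-threshold almost rectangle $\vc z$: for each $d\ge 1$ one shows that every near-threshold almost rectangle of step size $d$ is tame. The base case $d\le 1$ is precisely the lemma of \cref{sec:cyclic}. For the inductive step fix a near-threshold almost rectangle $\vc z$ of step size $\dz\ge 2$; applying a $p$-ary cyclic shift (which does not change $t(\vc z)$, by double cyclicity) we may assume $\vc z=\brap{j\times z^1,(p-j)\times z^2}$ with $z^1=z^2+\dz$. The idea is to imitate the proof of \cref{lem:stepone}: place $\vc z$ into ``plausible'' $s$-tuples of the same shape as those used there, but with one coordinate occupied by the two-dimensional tuple $\vc z$ rather than by a one-dimensional $\bran{k}$, the other coordinates being either near-threshold almost rectangles of step size at most $\dz-1$ (hence tame by the induction hypothesis) or the ``extreme'' tuples $\bran{0},\bran{n},\bran{r},\dots$ already used in \cref{sec:cyclic}, whose relevant values are controlled directly by the structure of $Q$.

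\textbf{The stacking lemma.} The workhorse is an analogue of \cref{lem:plaus}. Call almost rectangles $\vc z_1,\dots,\vc z_s$ \emph{stackable} if each $\vc z_i$ admits an image $\vc z_i'$ under double cyclicity --- cyclic shifts inside each of the $p$ column-blocks together with a $p$-ary (block-wise) cyclic shift, possibly also the extra global shift permitted by the cyclicity of $\ts$ (\cref{tscyclic}) --- such that $((\vc z_1')_c,\dots,(\vc z_s')_c)\in P$ for every coordinate $c\in[n]$. Since $t(\vc z_i)=t(\vc z_i')$ for each $i$ and $t$ is a polymorphism, stackability yields $(t(\vc z_1),\dots,t(\vc z_s))\in Q$. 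Viewing an almost rectangle as a subset of the $p\times p$ grid --- $j_i$ cyclically consecutive columns filled to cyclic height $z_i^1$, the remaining columns to cyclic height $z_i^2$, each column independently vertically shiftable and the whole block of columns cyclically rotatable --- stackability becomes a two-dimensional cyclic covering condition: the $s$ grid-patterns should cover every cell exactly $r$ times in Cases~(1),(3),(4) and at most $r$ times in Case~(2). In particular a stackable $s$-tuple must be plausible in the sense used in \cref{sec:cyclic} ($\sum_i|\vc z_i|=rn$, resp.\ $\le rn$), so the admissible numbers of ones are forced exactly as there.

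\textbf{Running the induction, case by case.} For a fixed template I would reuse the very $s$-tuples appearing in the proof of \cref{lem:stepone}, each time substituting $\vc z$ for one factor $\bran{k}$ (choosing the parameters so that $|\vc z|=pz^2+j\dz$ lands in the range needed there, which is possible because $\vc z$ is near-threshold) and replacing the remaining near-threshold factors by almost rectangles of step $\le\dz-1$. The combinatorial content is to realise these $s$-tuples as genuinely stackable: one ``cuts'' the $\dz$-thick step of $\vc z$ and distributes it, columnwise, among the companion patterns so that the per-cell multiplicity is $r$ (or $\le r$). The near-threshold bound $1/s^{\dz+10}$ imposed on $\vc z$ is exactly what forces the companions --- which have step $\le\dz-1$ and area within $O(1/s^{\dz+10})$ of $\tresh$ --- to satisfy their own, looser, bound $1/s^{\dz+9}$; there is room because $p$ is huge relative to $s^{5\bbb}$ and $\dz<5\bbb$. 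Once stackability is in place, $(t(\vc z),t(\vc z_2),\dots,t(\vc z_s))\in Q$ pins $t(\vc z)$ down exactly as in \cref{sec:cyclic}: in Cases~(1),(2),(3) membership in $Q=\ltrins$ (resp.\ $\gtrins$), together with the known values of the tame/extreme companions, bounds the number of ones among the $s$ values and forces $t(\vc z)=t\bran{0}$ when $\lambda(\vc z)<\tresh$ and $t(\vc z)=1-t\bran{0}$ otherwise (using $t\bran{0}=0$ in Case~(1) by \cref{lem:zero}, and the disequality half of \cref{lem:plaus} to pass between the two sides of the threshold); in Case~(4) one runs the same chain of disequalities $t(\vc z)\neq t(\cdot)\neq\dots\neq t\bran{0}$ as in \cref{sec:cyclic}, with the parity sub-cases (4a) $r\equiv s\pmod 2$ and (4b) $r$ even, $s$ odd handled by the same stackings used there. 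Note that, unlike the final assembly of \cref{thm:main}, no boundedness of $t$ is used in this section.

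\textbf{Main obstacle.} The hard part is the explicit two-dimensional stacking. In \cref{sec:cyclic} the covering was one-dimensional (cyclic intervals in $\mathbb{Z}_p$) and plausibility $\sum_i k_i=rn$ was essentially sufficient; here each factor is a genuinely two-dimensional almost rectangle whose ``tall'' columns are constrained to form a cyclic \emph{block} --- block-wise cyclic shifts do not let us scatter them arbitrarily --- and whose per-column vertical positions are the only freely adjustable data. Showing that one can nevertheless cut the $\dz$-thick step of $\vc z$ and redistribute it among $s-1$ companion patterns of strictly smaller step so that every one of the $p^2$ grid cells is covered exactly $r$ (or at most $r$) times, while keeping all companions near-threshold almost rectangles, is the technical heart of the section and must be carried out separately for the four cases and the two parity sub-cases of Case~(4).
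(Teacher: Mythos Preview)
Your high-level plan---induct on the step size, use Section~\ref{sec:cyclic} as the base, feed $\vc z$ into a ``plausible'' $s$-tuple, and read off $t(\vc z)$ from membership in $Q$---matches the paper, and your stackability condition is essentially the paper's columnwise plausibility (their Lemma~\ref{lem:pl} shows that column sums equal to $rp$ already suffice to build a genuine $r$-cover of the $p\times p$ grid). Where you diverge, and where the gap is, is in how the plausible tuple gets populated.

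You propose to rerun the entire Section~\ref{sec:cyclic} case analysis in two dimensions: put $\vc z$ in one slot and hand-craft companion almost rectangles of step $\le\dz-1$ so that all column sums equal $rp$. You correctly flag this as the main obstacle and do not resolve it. The paper avoids it completely. Its Lemma~\ref{lem:producing_plaus} fills $m$ slots with \emph{$p$-ary cyclic shifts of $\vc z$ itself}; by double cyclicity these share the $t$-value and the area of $\vc z$, and the single remainder $\vc l$ that makes the column sums $rp$ is automatically an almost rectangle of the \emph{same} step size $\dz$. The step reduction then comes from one uniform trick, not from per-template combinatorics: split $\vc l=\bra{l^1,\dots,l^p}$ into $\vc l_1=\bra{\lfloor l^1/2\rfloor,\dots,\lfloor l^p/2\rfloor}$ and $\vc l_2=\bra{\lceil l^1/2\rceil,\dots,\lceil l^p/2\rceil}$. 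Since $\dz\ge 2$, both halves have strictly smaller step, and $\vc k_1,\dots,\vc k_{m},\vc l_1,\vc l_2$ is still plausible. A single application of Lemma~\ref{lem:pl} (not-all-equal in Case~(4); the same plus complementary rectangles and Lemma~\ref{lem:zero} in Cases~(1)--(3)) finishes the step---none of the Section~\ref{sec:cyclic} chain of disequalities is replayed.

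There is also a second ingredient your plan lacks. The halving argument needs $\sgn(\lambda(\vc l_1)-\tresh)=\sgn(\lambda(\vc l_2)-\tresh)\neq\sgn(\lambda(\vc z)-\tresh)$, which fails when $\lambda(\vc z)$ is \emph{too} close to $\tresh$. The paper handles this with a preliminary ``push-away'' move: apply Lemma~\ref{lem:producing_plaus} with $m=s-1$ (resp.\ $m=2r-1$) to obtain a plausible $s$-tuple whose remainder $\vc w$ has the same step size as $\vc z$ and satisfies $\lambda(\vc w)-\tresh=-(s-1)(\lambda(\vc z)-\tresh)$, with $t(\vc w)=1-t(\vc z)$; iterate until the area is far enough from $\tresh$, then run the halving argument. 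Your decreasing near-threshold windows $1/s^{\dz+10}$ are consistent with this but do not by themselves deliver the sign control, and your sketch has no analogue of this phase.
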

\begin{proof}
The proof is by induction on the step size. Step sizes zero and one are dealt with in \cref{lem:stepsizeone}, so we assume that $\vc{z}$
is a near-threshold almost rectangle of step size $2 \leq \dz < 5\bbb$.

Assume first that $\lambda(\vc{z})$ is not \emph{too close to $\tresh$}, namely, $|\lambda(\vc{z})-\tresh| \geq 1/s^{5\bbb+4}$. We apply the second item in \cref{lem:producing_plaus} and get a plausible $r/\tresh$-tuple $\vc{k}_1,\vc{k}_2$, \dots, $\vc{k}_{r/\tresh-2}, \vc{l}_1, \vc{l}_2$ such that $\vc{z}, \vc{k}_1, \vc{k}_2$, \dots, $ \vc{k}_{r/\tresh-2}$ all have the same $t$-images and areas, and  $\vc{l}_1$ and $\vc{l}_2$ are almost rectangles with step sizes strictly smaller than $\dz$, whose areas differ by at most $1/p$. 

The average area of almost rectangles $\vc{k}_1,\vc{k}_2$, \dots, $\vc{k}_{r/\tresh-2}$, $\vc{l}_1$, $\vc{l}_2$ is $\tresh$, the first $r/\tresh-2$ of them have the same area as $\vc{z}$, bounded away from $\tresh$ by a constant (namely $1/s^{5b+4}$), and the last two have almost the same area (the difference is at most $1/p$). By choosing a large enough $p$ we get 
\begin{equation*}
\sgn\left(\lambda(\vc{l}_1)-\tresh\right) = \sgn\left(\lambda(\vc{l}_2)-\tresh\right) \neq \sgn\left(\lambda(\vc{z})-\tresh\right)
\end{equation*}
and 
\begin{equation*}
\left|\lambda(\vc{l}_i)-\tresh\right| \leq \frac{r}{\tresh} \cdot \left|\lambda(\vc{z})-\tresh\right|;
\end{equation*}
in particular, both $\vc{l}_i$ are near-threshold since 
\begin{equation*}
\frac{r}{\tresh} \cdot \left|\lambda(\vc{z})-\tresh\right|\leq s \cdot \left|\lambda(\vc{z})-\tresh\right| \leq \frac{1}{s^{\dz + 3 -1}} \leq \frac{1}{s^{\Delta l_i +3}}.
\end{equation*}
By the induction hypothesis, both $\vc{l}_i$ are tame. Since $\sgn(\lambda(\vc{z})-1/2) \neq \sgn(\lambda(\vc{l}_1)-1/2)$ and
$t(\vc{z}) = t(\vc{k}_1) = t(\vc{k}_2) = \dots = t(\vc{k}_{r/\tresh-2})$,  $t(\vc{l}_1) = t(\vc{l}_2)$, it is now enough to show that 
the values $t(\vc{k}_1)$, $t(\vc{k}_2)$, \dots, $t(\vc{k}_{r/\tresh-2})$, $t(\vc{l}_1)$, $t(\vc{l}_2)$ are not all equal. 
In Case (4), this follows directly from \cref{lem:pl} as we have $r/\tresh = s$ tuples and $Q$ is the not-all-equal relation.
In Cases (2) and (3), we also have $s$ tuples but we can only derive that the values are not all one. In these cases we also apply \cref{lem:pl} to the ``complementary'' tuples  $\vc{\overline{k}}_1$, $\vc{\overline{k}}_2$, \dots, $\vc{\overline{k}}_{s-2}$, $\vc{\overline{l}}_1$, $\vc{\overline{l}}_2$, where for $\vc{k}=\bra{k^1,k^2, \dots, k^p}$, we define $\vc{\overline{k}}=\bra{p-k^1,p-k^2, \dots, p-k^p}$. We get that the $t$-images cannot all be one, so the values 
$t(\vc{k}_1),t(\vc{k}_2), \dots, t(\vc{k}_{s-2}), t(\vc{l}_1),t(\vc{l}_2)$ are not all zero by the second part of \cref{lem:pl}. 
Case (1) is proved in a similar way, we just need to complete the $r/\tresh = 2r$ tuples  to $s$ tuples by adding $s-2r$ zeros before applying \cref{lem:pl}.

It remains to deal with the case that $\lambda(\vc{z})$ is too close to $\tresh$, that is, $|\lambda(\vc{z})-\tresh| < 1/s^{5\bbb+4}$.
In this case we will shortly find an almost rectangle $\vc{l}$ with the same step size as $\vc{z}$ such that $t(\vc{l}) = 1- t(\vc{z})$ and $\lambda(\vc{l}) - \tresh = -s' (\lambda(\vc{z})-\tresh)$, where $s'$ is such that $2 \leq s' \leq s$. Having such an $\vc{l}$, if $\lambda(\vc{l})$ is already not too close to $\tresh$, then we observe that $\vc{l}$ is near-threshold (indeed,  $|\lambda(\vc{l})-\tresh| \leq s|\lambda(\vc{z})-\tresh| \leq s/s^{5b+4} < 1/s^{\Delta l+3}$)
and apply to $\vc{l}$ the first part of the proof, thus obtaining that $\vc{l}$ is tame and, consequently, $\vc{z}$ is tame as well. If $\lambda(\vc{l})$ is still too close to $\tresh$, then we simply repeat the process until we get a rectangle that is not too close.  

It remains to find such an almost rectangle $\vc{l}$. We apply the first item of \cref{lem:producing_plaus} and get a plausible $r/\tresh$-tuple $\vc{k}_1, \dots, \vc{k}_{r/\tresh-1}, \vc{l}$ such that $t(\vc{z}) = t(\vc{k}_1) = \dots = t(\vc{k}_{r/\tresh-1})$ and $\vc{l}$ is an almost rectangle of the same step size as $\vc{z}$. Since the area of each $\vc{k}_i$ is equal to $\lambda(\vc{z})$ and the average area in the plausible $r/\tresh$-tuple is $\tresh$, we get that $\lambda(\vc{l})-\tresh = - (r/\tresh-1)(\lambda(\vc{z})-\tresh)$. By the same trick as previously, using \cref{lem:pl}, using complementary tuples in Cases (2),(3), and additionally adding zeros in Case (1), we get that $t(\vc{l})$ and $t(\vc{z})$ are not equal. This concludes the construction of $\vc{l}$ and the proof of the lemma.
\end{proof}

\subsection{Contradiction} \label{subsec:contra}

The proof can now be finished by using the tameness of near-threshold almost rectangles together with the $\bbb$-boundedness of $t$ as follows. 

Let $m = (p-1)/2$ and choose positive integers $z^{2,1}$ and $z^{2,2}$ so that $\tresh p -2\bbb<z^{2,1}<z^{2,2} < \tresh p$,
and the x/y-tuples $(z^{2,1} \times x, (p-z^{2,1}) \times y)$  and 
$(z^{2,2} \times x, (p-z^{2,2}) \times y)$
are $\sim$-equivalent (see \cref{def:bounded} of boundedness). 
This is possible by the pigeonhole principle since there are more than $\bbb$ integers in the interval and $\sim$ has at most $\bbb$ classes. 

By the choice of $z^{2,1}$ and $z^{2,2}$, for any meaningful choice of $z^1$, we have
$t(\vc{z}_1) = t(\vc{z}_2)$ where $\vc{z}_i= \brap{m \times z^1, (p-m) \times z^{2,i}}$, $i=1,2$.  We choose $z^1$ as the maximum number such that $\lambda(\vc{z}_1) < \tresh$. Note that for $z^1=p$ the area of $\vc{z}_1$ can be made arbitrarily close to $(1+\tresh)/2 > \tresh$ by choosing a sufficiently large $p$, so we may assume $z^1<p$. 

From $m < p/2$ and the definition of $\vc{z}_i$ it follows that increasing $z^{2,1}$ by one makes the area of $\vc{z}_1$ greater than increasing $z^1$ by one, therefore  $\lambda(\vc{z}_2) > \tresh$.  

Note that $z^1 > \tresh p$ since otherwise the area of $\vc{z}_2$  is less than $\tresh$.  On the other hand,  $z^1 < \tresh p +3b$, otherwise the area of $\vc{z}_1$ is greater (assuming $p>5$):
\[
\lambda(\vc{z}_1)=\frac{mz^1+(p-m)z^{2,1}}{p^2} \geq \frac{\frac{p-1}{2}(\tresh p +3b)+\frac{p+1}{2}(\tresh p -2b)}{p^2} = \frac{ \tresh p^2 +\frac{b(p-5)}{2}}{p^2} > \tresh.
\]
It follows that the step size of both $\vc{z}_1$ and $\vc{z}_2$ is less than $5b$, so both $\vc{z}_i$ are almost rectangles. By choosing a sufficiently large $p$, the difference $\lambda(\vc{z}_2)-\lambda(\vc{z}_1)$ can be made arbitrarily small, and since $\lambda(\vc{z}_1) < \tresh < \lambda(\vc{z}_2)$ both $\vc{z}_i$ are then near-threshold.

Now the tameness of near-threshold almost rectangles proved in \cref{lem:main}
gives us $t(\vc{z}_1) = t\bran{0} \neq 1-t\bran{0} = t(\vc{z}_2)$. On the other hand, we also have $t(\vc{z}_1) = t(\vc{z}_2)$, a contradiction.

\section{Basic Cases}
\label{append}

In \cref{subsection:basiccases} we mentioned that tractable PCSPs from \cref{thm:bg} are, in a certain sense, building blocks for all tractable symmetric Boolean PCSPs allowing negations. Although this claim essentially follows from \cite{BG21}, we provide a proof for completeness.

The complexity classification is based on three important types of polymorphisms defined for a positive odd integer $n$ and $\vc{x}= (x_1, x_2, \dots, x_n) \in \{0,1\}^n$ as follows.
\begin{itemize}
    \item The Parity function: $\Par_n(\vc{x})=1$ iff $\Sigma_{i=1}^n x_i$ \text{ is odd}.
    \item The Majority function: $\Maj_n(\vc{x})=1$ iff $\Sigma_{i=1}^n x_i>n/2$.
    \item The Alternating-Threshold function: $\AT_n(\vc{x})=1$ iff $\Sigma_{i=1}^n (-1)^{i-1} x_i>0$.
\end{itemize}
The negations of these functions are denoted using a horizontal bar, e.g., $\overline{\Par}_n(\vc{x})= 1 - \Par(\vc{x})$. 

\begin{theorem}[\cite{BG21}] \label{1thm:bg}
Let $\Gamma$ be a symmetric Boolean $\PCSP$ template allowing negations. If at least one of $\Par_n$, $\Maj_n$, $\AT_n$, $\overline{\Par}_n$, $\overline{\Maj}_n$, $\overline{\AT}_n$ is a polymorphism of $\Gamma$ for all odd $n$, then $\PCSP(\Gamma)$ is polynomial-time solvable. Otherwise, $\PCSP(\Gamma)$ is NP-hard.
\end{theorem}

The following two lemmas are proved in \cite{BG21}
(Claim 4.6. and Claim 4.8.).

\begin{lemma}\label{1:at} Let $(P,Q)$ be a pair of Boolean symmetric relations of arity $s \geq 2$ such that $\AT_n$ is its polymorphism for every odd $n$. 
\begin{itemize}
    \item If $P$ contains $\bra{r}$ for some $r \in\{1,2,\dots,s-1\}$, then $Q$ contains $\naes$.
    \item If $P$ contains $\bra{r_1}$ and $\bra{r_2}$ for two different elements $r_1,r_2$  such that $\{r_1,r_2\}\neq \{0,s\}$, then $Q =\{0,1\}^s$.
\end{itemize}
\end{lemma}

\begin{lemma}\label{1:maj}
Let $(P,Q)$ be a pair of Boolean symmetric relations of arity $s \geq 2$ such that $\Maj_n$ is its polymorphism for every odd $n$. Suppose $\{r \mid \bra{r} \in P\}$ in not contained in $\{0,s\}$, and let $r_1$ and $r_2$ be the minimum and maximum of this set, respectively. 
    \begin{itemize}
    \item If $r_1 < s/2$, then $Q$ contains $\leq\!(2r_2-1)\text{-in-}s$.
    \item If $r_2 > s/2$, then $Q$ contains $\geq\!(2r_1-s+1)\text{-in-}s$.
    \item If $r_1=r_2=s/2$, then $Q$ contains $\naes$.
    \end{itemize}
\end{lemma}

The following lemma states an analogous fact for Parity. We provide a proof, since it is not explicit in \cite{BG21}. 

\begin{lemma}\label{1:par}
Let $(P,Q)$ be a pair of Boolean symmetric relations of arity $s \geq 2$ such that $\Par_n$ is its polymorphism for every odd $n$. 
    \begin{itemize}
        \item If $P$ contains $\bra{r}$ for some odd $r \in \{1,2, \dots, s-1\}$, then $Q$ contains $\oddins$.
        \item If $P$ contains $\bra{r}$ for some even $r \in \{1,2, \dots, s-1\}$, then $Q$ contains $\evenins$. 
    \end{itemize}
\end{lemma}
\begin{proof}
Let $q \in \{0,1, \dots, s\}$ be of the same parity as $r$. Our aim is to show that $\bra{q} \in Q$, which is enough as $Q$ is symmetric. If $q \geq r$, we set $n = q-r+1$ and use the following zero-one $s \times n$ matrix: the first $r-1$ rows are all ones, the next $n$ rows form the identity matrix, and the remaining rows are all zeros. Note that each column contains exactly $r$ ones, so it belongs to $P$. Applying $\Par_n$ we get $\bra{r-1+q-r+1} = \bra{q} \in Q$.
For $q < r$, we set $n = r-q+1$ and use the following matrix: the first $q$ rows are all ones, the next $n$ rows form the negation of the identity matrix, and the remaining rows are all zeros. Each column again contains $q+n-1 = r$ ones and applying $\Par_n$ gives us $\bra{q} \in Q$.
\end{proof}

The goal of this section is a simple consequence of these three lemmas.

\begin{theorem}\label{1thm:bsc}
    Every tractable symmetric Boolean PCSP allowing negations can be obtained by
\begin{itemize}
    \item taking 
    any number of relation pairs from one of the following three items (where $r$ and $s$ are positive integers):
    \begin{enumerate}[(a)]
\item $(\oddins,\oddins)$, or $(\evenins,\evenins)$
\item $(\lrins,\ltrins)$ and $r \leq s/2$, or \\ 
      $(\grins, \gtrins)$ and $r \geq s/2$, or \\
      $(\frac{s}{2}\mbox{-in-}s,\naes)$ and $s$ is even
\item $(\rins,\naes)$
\end{enumerate} 
    \item adding any number of ``trivial'' relation pairs $(P,Q)$ such that $P \subseteq Q$, and $Q$ is the full relation or $P$ contains only constant tuples, and 
    \item taking a homomorphic relaxation of the obtained template.
\end{itemize}
\end{theorem}

\begin{proof}
    Let $\Gamma$ be a symmetric Boolean $\PCSP$ template allowing negations such that $\PCSP(\Gamma)$ is solvable in polynomial time. Then by \cref{1thm:bg} (as we assumed P $\neq$ NP) we have that at least one of $\AT_n$, $\Maj_n$, $\Par_n$, $\overline{\AT}_n$, $\overline{\Maj}_n$, $\overline{\Par}_n$ is a polymorphism of $\Gamma$ for all odd $n$.
    
    In the first case, \cref{1:at} implies that in every nontrivial relation pair $(P,Q)$ in $\Gamma$, $P$ is contained in (in fact equal to) $\rins$ for some $r$, $s$ and $Q$ contains $\naes$. Therefore $\Gamma$ is a homomorphic relaxation of a template obtained by taking relation pairs from item (c) and trivial relation pairs, where the homomorphisms witnessing the relaxation are the identity functions.

     \cref{1:maj} implies the theorem for $\Maj_n$, item (b). Indeed, if $(P,Q)$ is nontrivial and $r_1, r_2$ are as in the lemma, then necessarily $r_2 \leq s/2$ or $s/2 \leq r_1$, otherwise $Q = \{0,1\}^s$ by the lemma.  If $r_1=r_2=s/2$ we get $P = \frac{s}{2}\text{-in-}s$ and $Q$ contains $\naes$. Otherwise, if $r_2 \leq s/2$, then $P$ is contained in $\leq\!r_2\text{-in-}s$ and $Q$ contains $\leq\!(2r_2-1)\text{-in-}s$, and if $s/2 \leq r_1$, then $P$ is contained in $\geq\!r_1\text{-in-}s$ and $Q$ contains $\geq\!(2r_1-s+1)\text{-in-}s$. 
     
     Similarly, \cref{1:par} implies the theorem for $\Par_n$, item (a).
     
    For $\overline{\AT}_n$, we consider a new template $\Gamma'$ obtained by negating, in each relational pair $(P,Q)$, all tuples in $Q$. The template $\Gamma'$ has $\AT_n$ as a polymorphism, so we can apply the previous argument to it. The original $\Gamma$ is then a homomorphic relaxation of the same template as $\Gamma'$, where the first homomorphism is the identity and the second one is the negation function. The proof for $\overline{\Maj}_n$, $\overline{\Par}_n$ is  analogous. 
\end{proof}

\section{Conclusion}

We have characterized finite tractability among the basic tractable cases in the Brakensiek--Guruswami classification~\cite{BG21} of symmetric Boolean PCSPs allowing negations. A natural direction for future research is an extension to all the tractable cases (not just the basic ones), or even to all symmetric Boolean PCSPs~\cite{FKOS19}, not only those allowing negations. An obstacle, where our efforts have failed so far, is already in relaxations of the basic templates $(P,Q)$ with disequalities. For example, which $(P,Q)$, $(\neq,\neq)$, with $P$ a subset of  $\lrins$ and $Q$ a superset of $\ltrins$, give rise to finitely tractable PCSPs?

Another natural direction is to better understand the ``level of tractability.'' For the finitely tractable templates $(\rel A,\rel B)$ considered in this paper, it is always possible to find a tractable $\CSP(\rel C)$ with $\rel A \to \rel C \to \rel B$ and such that $\rel C$ is two-element. Is it so for all symmetric Boolean templates? For general Boolean templates, the answer is ``No'': \cite{DSMMNS21} presented an example that requires a three-element $\rel C$ and later \cite{KMZ22} showed that there is no finite upper bound on the size of $\rel C$.  There are also natural concepts beyond finite tractability when some weaker finiteness assumption is placed on $\rel C$. Recent papers \cite{M25prep}, \cite{PRSS25prep} provide significant contributions in this direction.

\ifarxiv
\bibliographystyle{plainurl}
\fi
\iftcs
\bibliographystyle{elsarticle-num} 
\fi
\bibliography{ref}

\begin{thebibliography}{10}

\bibitem{AB21}
Kristina Asimi and Libor Barto.
\newblock {Finitely Tractable Promise Constraint Satisfaction Problems}.
\newblock In Filippo Bonchi and Simon~J. Puglisi, editors, {\em 46th International Symposium on Mathematical Foundations of Computer Science (MFCS 2021)}, volume 202 of {\em Leibniz International Proceedings in Informatics (LIPIcs)}, pages 11:1--11:16, Dagstuhl, Germany, 2021. Schloss Dagstuhl -- Leibniz-Zentrum f{\"u}r Informatik.
\newblock \href {https://doi.org/10.4230/LIPIcs.MFCS.2021.11} {\path{doi:10.4230/LIPIcs.MFCS.2021.11}}.

\bibitem{AGH17}
Per Austrin, Venkatesan Guruswami, and Johan H{\aa}stad.
\newblock \((2+{\epsilon})\)-{S}at is {NP}-hard.
\newblock {\em {SIAM} J. Comput.}, 46(5):1554--1573, 2017.
\newblock \href {https://doi.org/10.1137/15M1006507} {\path{doi:10.1137/15M1006507}}.

\bibitem{Bar19}
Libor Barto.
\newblock Promises make finite (constraint satisfaction) problems infinitary.
\newblock In {\em 2019 34th Annual ACM/IEEE Symposium on Logic in Computer Science (LICS)}, pages 1--8, 2019.
\newblock \href {https://doi.org/10.1109/LICS.2019.8785671} {\path{doi:10.1109/LICS.2019.8785671}}.

\bibitem{BBKO21}
Libor Barto, Jakub Bul\'in, Andrei Krokhin, and Jakub Opr{\v s}al.
\newblock Algebraic approach to promise constraint satisfaction.
\newblock {\em J. ACM}, 68(4):Art. 28, 66, 2021.
\newblock \href {https://doi.org/10.1145/3457606} {\path{doi:10.1145/3457606}}.

\bibitem{BK12}
Libor Barto and Marcin Kozik.
\newblock Absorbing subalgebras, cyclic terms, and the constraint satisfaction problem.
\newblock {\em Log. Methods Comput. Sci.}, 8(1:07):1--26, 2012.
\newblock Special issue: Selected papers of the Conference ``Logic in Computer Science (LICS) 2010''.
\newblock \href {https://doi.org/10.2168/LMCS-8(1:07)2012} {\path{doi:10.2168/LMCS-8(1:07)2012}}.

\bibitem{BKW17}
Libor Barto, Andrei Krokhin, and Ross Willard.
\newblock Polymorphisms, and how to use them.
\newblock In Andrei Krokhin and Stanislav {\v Z}ivn{\' y}, editors, {\em The Constraint Satisfaction Problem: Complexity and Approximability}, volume~7 of {\em Dagstuhl Follow-Ups}, pages 1--44. Schloss Dagstuhl--Leibniz-Zentrum fuer Informatik, Dagstuhl, Germany, 2017.
\newblock \href {https://doi.org/10.4230/DFU.Vol7.15301.1} {\path{doi:10.4230/DFU.Vol7.15301.1}}.

\bibitem{BOP18}
Libor Barto, Jakub Opr{\v{s}}al, and Michael Pinsker.
\newblock The wonderland of reflections.
\newblock {\em Israel Journal of Mathematics}, 223(1):363--398, Feb 2018.
\newblock \href {https://doi.org/10.1007/s11856-017-1621-9} {\path{doi:10.1007/s11856-017-1621-9}}.

\bibitem{Bod21}
Manuel Bodirsky.
\newblock {\em Complexity of infinite-domain constraint satisfaction}, volume~52 of {\em Lecture Notes in Logic}.
\newblock Cambridge University Press, Cambridge; Association for Symbolic Logic, Ithaca, NY, 2021.
\newblock \href {https://doi.org/10.1017/9781107337534} {\path{doi:10.1017/9781107337534}}.

\bibitem{BG21}
Joshua Brakensiek and Venkatesan Guruswami.
\newblock Promise constraint satisfaction: Algebraic structure and a symmetric boolean dichotomy.
\newblock {\em SIAM Journal on Computing}, 50(6):1663--1700, 2021.
\newblock \href {https://doi.org/10.1137/19M128212X} {\path{doi:10.1137/19M128212X}}.

\bibitem{BJK05}
Andrei Bulatov, Peter Jeavons, and Andrei Krokhin.
\newblock Classifying the complexity of constraints using finite algebras.
\newblock {\em SIAM J. Comput.}, 34(3):720--742, March 2005.
\newblock \href {https://doi.org/10.1137/S0097539700376676} {\path{doi:10.1137/S0097539700376676}}.

\bibitem{Bul17}
Andrei~A. Bulatov.
\newblock A dichotomy theorem for nonuniform {CSP}s.
\newblock In {\em 2017 IEEE 58th Annual Symposium on Foundations of Computer Science (FOCS)}, pages 319--330, October 2017.
\newblock \href {https://doi.org/10.1109/FOCS.2017.37} {\path{doi:10.1109/FOCS.2017.37}}.

\bibitem{DSMMNS21}
Guofeng Deng, Ezzeddine~El Sai, Trevor Manders, Peter Mayr, Poramate Nakkirt, and Athena Sparks.
\newblock Sandwiches for promise constraint satisfaction.
\newblock {\em Algebra Universalis}, 82(1):Paper No. 15, 8, 2021.
\newblock \href {https://doi.org/10.1007/s00012-020-00702-5} {\path{doi:10.1007/s00012-020-00702-5}}.

\bibitem{FV98}
Tom{\'{a}}s Feder and Moshe~Y. Vardi.
\newblock The computational structure of monotone monadic {SNP} and constraint satisfaction: A study through datalog and group theory.
\newblock {\em SIAM J. Comput.}, 28(1):57--104, February 1998.
\newblock \href {https://doi.org/10.1137/S0097539794266766} {\path{doi:10.1137/S0097539794266766}}.

\bibitem{FKOS19}
Miron Ficak, Marcin Kozik, Miroslav Ol{\v s}{\'a}k, and Szymon Stankiewicz.
\newblock {Dichotomy for Symmetric Boolean PCSPs}.
\newblock In Christel Baier, Ioannis Chatzigiannakis, Paola Flocchini, and Stefano Leonardi, editors, {\em 46th International Colloquium on Automata, Languages, and Programming (ICALP 2019)}, volume 132 of {\em Leibniz International Proceedings in Informatics (LIPIcs)}, pages 57:1--57:12, Dagstuhl, Germany, 2019. Schloss Dagstuhl--Leibniz-Zentrum fuer Informatik.
\newblock \href {https://doi.org/10.4230/LIPIcs.ICALP.2019.57} {\path{doi:10.4230/LIPIcs.ICALP.2019.57}}.

\bibitem{HN90}
Pavol Hell and Jaroslav Ne{\v{s}}et{\v{r}}il.
\newblock On the complexity of {$H$}-coloring.
\newblock {\em J. Combin. Theory Ser. B}, 48(1):92--110, 1990.
\newblock \href {https://doi.org/10.1016/0095-8956(90)90132-J} {\path{doi:10.1016/0095-8956(90)90132-J}}.

\bibitem{Jea98}
Peter Jeavons.
\newblock On the algebraic structure of combinatorial problems.
\newblock {\em Theor. Comput. Sci.}, 200(1-2):185--204, 1998.
\newblock \href {https://doi.org/10.1016/S0304-3975(97)00230-2} {\path{doi:10.1016/S0304-3975(97)00230-2}}.

\bibitem{KMZ22}
Alexandr Kazda, Peter Mayr, and Dmitriy Zhuk.
\newblock Small promise {CSP}s that reduce to large {CSP}s.
\newblock {\em Log. Methods Comput. Sci.}, 18(3):Paper No. 25, 14, 2022.
\newblock \href {https://doi.org/10.46298/lmcs-18(3:25)2022} {\path{doi:10.46298/lmcs-18(3:25)2022}}.

\bibitem{KKR17}
Vladimir Kolmogorov, Andrei Krokhin, and Michal Rol\'inek.
\newblock The complexity of general-valued {CSP}s.
\newblock {\em SIAM Journal on Computing}, 46(3):1087--1110, 2017.
\newblock \href {https://doi.org/10.1137/16M1091836} {\path{doi:10.1137/16M1091836}}.

\bibitem{KO22}
Andrei Krokhin and Jakub Opr\v{s}al.
\newblock An invitation to the promise constraint satisfaction problem.
\newblock {\em ACM SIGLOG News}, 9(3):30–59, August 2022.
\newblock \href {https://doi.org/10.1145/3559736.3559740} {\path{doi:10.1145/3559736.3559740}}.

\bibitem{LZ25}
Alberto Larrauri and Stanislav {\v Z}ivn{\'y}.
\newblock Solving promise equations over monoids and groups.
\newblock {\em ACM Trans. Comput. Log.}, 26(1):Art. 3, 24, 2025.
\newblock \href {https://doi.org/10.1145/3698106} {\path{doi:10.1145/3698106}}.

\bibitem{M24}
Antoine Mottet.
\newblock {Promise and Infinite-Domain Constraint Satisfaction}.
\newblock In Aniello Murano and Alexandra Silva, editors, {\em 32nd EACSL Annual Conference on Computer Science Logic (CSL 2024)}, volume 288 of {\em Leibniz International Proceedings in Informatics (LIPIcs)}, pages 41:1--41:19, Dagstuhl, Germany, 2024. Schloss Dagstuhl -- Leibniz-Zentrum f{\"u}r Informatik.
\newblock \href {https://doi.org/10.4230/LIPIcs.CSL.2024.41} {\path{doi:10.4230/LIPIcs.CSL.2024.41}}.

\bibitem{M25prep}
Antoine Mottet.
\newblock Algebraic and algorithmic synergies between promise and infinite-domain {CSPs}, 2025.
\newblock \href {https://arxiv.org/abs/2501.13740} {\path{arXiv:2501.13740}}.

\bibitem{NZ24}
Tamio-Vesa Nakajima and Stanislav {\v Z}ivn{\'y}.
\newblock On the complexity of symmetric vs. functional {PCSP}s.
\newblock {\em ACM Trans. Algorithms}, 20(4):Art. 33, 29, 2024.
\newblock \href {https://doi.org/10.1145/3673655} {\path{doi:10.1145/3673655}}.

\bibitem{PRSS25prep}
Michael Pinsker, Jakub Rydval, Moritz Schöbi, and Christoph Spiess.
\newblock Three fundamental questions in modern infinite-domain constraint satisfaction, 2025.
\newblock \href {https://arxiv.org/abs/2502.06621} {\path{arXiv:2502.06621}}.

\bibitem{Sch78}
Thomas~J. Schaefer.
\newblock The complexity of satisfiability problems.
\newblock In {\em Proceedings of the Tenth Annual ACM Symposium on Theory of Computing}, STOC '78, pages 216--226, New York, NY, USA, 1978. ACM.
\newblock \href {https://doi.org/10.1145/800133.804350} {\path{doi:10.1145/800133.804350}}.

\bibitem{Zhu17}
Dmitriy Zhuk.
\newblock A proof of {CSP} dichotomy conjecture.
\newblock In {\em 2017 IEEE 58th Annual Symposium on Foundations of Computer Science (FOCS)}, pages 331--342, Oct 2017.
\newblock \href {https://doi.org/10.1109/FOCS.2017.38} {\path{doi:10.1109/FOCS.2017.38}}.

\bibitem{Zhu20}
Dmitriy Zhuk.
\newblock A proof of the {CSP} dichotomy conjecture.
\newblock {\em J. ACM}, 67(5), August 2020.
\newblock \href {https://doi.org/10.1145/3402029} {\path{doi:10.1145/3402029}}.

\end{thebibliography}

\end{document}